\theoremstyle{plain}
\newtheorem{theorem}{Theorem}[section]
\theoremstyle{definition}
\theoremstyle{remark}
\begin{document}

\articletype{ARTICLE TEMPLATE}

\title{Optimal Control of Reserve Asset Portfolios for Stablecoins}

\author{
\name{Alexander Hammerl\textsuperscript{a}\thanks{CONTACT Alexander Hammerl. Email: asiha@dtu.dk}}
\affil{\textsuperscript{a}Technical University of Denmark, Department of Management Engineering, Bygningstorvet 358, Kgs. Lyngby, 2800, Denmark.}
}

\maketitle

\begin{abstract}

Stablecoins promise par convertibility, yet issuers must balance immediate liquidity against yield on reserves to keep the peg credible. We study this treasury problem as a continuous-time control task with two instruments: reallocating reserves between cash and short-duration government bills, and setting a spread fee for either minting or burning the coin. Mint and redemption flows follow mutually exciting processes that reproduce clustered order flow. Peg deviations arise when immediate cash coverage is insufficient relative to outstanding supply, and the market price relaxes toward this liquidity-coverage fair value. We develop a stochastic model predictive control framework that incorporates moment closure for event intensities. Using Pontryagin's Maximum Principle, we show that the optimal reallocation control exhibits a soft-thresholding structure: no rebalancing occurs when the shadow-cost differential lies within a deadzone set by transaction costs, and reallocation scales linearly beyond that threshold up to a capacity-imposed saturation limit. Introducing settlement windows leads to a sampled-data implementation with a simple threshold (soft-thresholding) structure for rebalancing. We also establish a monotone stress-response property: as expected outflows intensify or windows lengthen, the optimal policy shifts predictably toward cash. In simulations covering various stress test scenarios, the controller preserves most bill carry in calm markets, builds cash quickly when stress emerges, and avoids unnecessary rotations under transitory signals. The proposed policy is implementation-ready and aligns naturally with operational cut-offs. Our results translate empirical flow risk into auditable treasury rules that improve peg quality without sacrificing avoidable carry.

\end{abstract}

\begin{keywords}
Stablecoin; peg stability; reserve management; stochastic optimal control; model predictive control; Hawkes processes
\end{keywords}

\section{Introduction}

Stablecoins have become an indispensable component of the infrastructure of crypto-asset markets and an increasingly relevant settlement rail beyond them. Their defining promise is par convertibility to a more traditional asset, usually a fiat currency or a commodity, on demand. Fulfilling that promise requires an issuer to satisfy redemptions promptly while earning sufficient carry on reserves to operate sustainably. Regulatory agencies typically restrict issuers to a narrow range of permissible reserve assets. For example, the recently enacted U.S.\ GENIUS Act \cite{GENIUSAct2025} requires that stablecoin issuers pegged to the US dollar hold currency reserves exclusively in cash, short-term US government bonds, or overnight repurchase agreements thereof. In operational terms, this creates a treasury policy problem: how much of the reserve to hold as immediately available cash versus short-duration government bills, and how to set mint and burn spreads that help absorb order-flow imbalances without creating persistent departures of the secondary-market price from one. This paper develops a continuous-time framework that treats reserve allocation and fee setting as joint controls under stochastic mint and redemption flows. The objective is to minimize peg deviations and trading frictions while preserving yield, with policies designed to be implementable within daily treasury workflows.

Closest to the present paper is \cite{GoelLewrickAgarwal2024}, who develop a
discrete-time dynamic model of a fiat-backed stablecoin issuer that allocates
reserves between cash and bonds to maximize discounted cash flows. They show
that liquidity and capital requirements jointly reduce default probability and the
expected price impact of forced bond sales, and provide a calibrated mapping
from prudential targets to requirement levels. Our work differs in that we adopt the issuer's operational perspective rather than solving for a regulatory optimum, and in the modelling assumptions on both market dynamics and the control framework. While the framework of \cite{GoelLewrickAgarwal2024}
informs the calibration of prudential requirements, ours translates such requirements into auditable, real-time treasury rules.
While fiat-backed stablecoins operate outside the full decentralization paradigm, their treasury management problem intersects with broader dynamics in decentralized finance (DeFi). A literature review by \cite{meyer2022defi} identifies stablecoins as a central class of "financial tokens" in DeFi, emphasizing that even fully collateralized designs can deviate from their peg, exhibit state-dependent volatility, and depend critically on liquidity provision and external data feeds. This positioning situates issuer reserve policies alongside on-chain mechanisms, market microstructure effects, and inter-protocol linkages that can amplify or dampen peg pressures.

Related market-design research links validator and block-builder incentives to liquidity and price stability in on-chain asset markets. \cite{oez2023waitinggames} show that block-production timing strategies can yield measurable gains in maximal extractable value (MEV) without necessarily harming consensus stability. For pegged digital assets that rely on on-chain liquidity pools, MEV-maximizing behaviors among market participants can affect execution quality for redemptions and rebalancing, introducing another stochastic element to the issuer’s reserve-management problem.

\cite{cumming2025cryptofunds} show that well-connected crypto funds can boost token valuation and long-run performance, suggesting that similar strategic engagement with liquidity providers and market-makers could strengthen stablecoin reserve-management.

Recent empirical work clarifies why issuers must carefully manage both their mint-and-burn policies and reserve assets, even for fully collateralized digital currencies. \cite{momtaz2021pricing} documents that while the mean cryptocurrency issued in an initial coin offering (ICO) can deliver positive long-run buy-and-hold returns, the median token loses roughly 30\% of its value over one to twenty-four months, with large issuances more often overpriced and underperforming. \cite{hoang-2021} use high-frequency data to show that no major stablecoin is absolutely stable, that return variance is non-zero, and that volatility is state dependent. Related studies document that stability varies meaningfully across designs and over time \citep{grobys-2021,jarno-2021}, and that some stablecoins can act as conditional diversifiers or safe havens relative to Bitcoin during stress \citep{baur-hoang-2020,wang-2020}. These findings imply that operational choices inside the issuer, such as cash buffers, rebalancing cadence, and fee schedules, can shape realized peg behavior observed in secondary markets.

Price formation around the peg can be modeled by combining primary arbitrage with secondary-market order-flow dynamics. \cite{pernice-2021} formalizes a fully collateralized environment in which arbitrage against the issuer and trend-following demand interact with deviations from fundamental value. Empirically, peg deviations and trend variables predict short-horizon returns when extreme observations are retained, while explanatory power shrinks once large dislocations are trimmed \citep{pernice-2021}. This pattern is consistent with frictions and regime-like behavior, and it motivates state-dependent fee and inventory policies that anticipate clustered dislocations rather than treating all deviations as transitory noise.

A second body of evidence connects visible stablecoin flows to activity in major crypto assets. \cite{ante-2021} study 1{,}587 on-chain stablecoin transfers of at least one million U.S.\ dollars and find significant abnormal increases in Bitcoin trading volume around these transfers. High-frequency work on Tether corroborates flow-through to returns and microstructure \citep{grobys-huynh-2021}, and pandemic-period analyses report regime-dependent behavior consistent with liquidity demand and risk rebalancing \citep{jeger-2020}. These results support modeling mint and redemption dynamics as first-order drivers of liquidity demand that issuers must be prepared to meet.

The causal direction between issuance and prices remains debated. \cite{griffin-2018} present evidence that Tether flows arrive after downturns and predict positive subsequent Bitcoin returns; \cite{wei-2018} also report issuance-related effects on Bitcoin. In contrast, \cite{kristoufek-2021} finds that issuance largely reacts to crypto price moves, with weak net spillovers from stablecoins to major crypto assets once contemporaneous correlations are controlled. Both mechanisms can surface in different market states, which is consistent with a policy problem in which the issuer prepares for rare but severe clustering of outflows without sacrificing unnecessary carry in normal conditions.

Macro-financial conditions further shape stablecoin behavior and the reserve allocation trade-off. \cite{nguyen-2022} document that increases in interbank rates compress prices and volatility of leading stablecoins while raising both price levels and volatility of traditional cryptocurrencies; trading activity increases across both groups. Beyond market microstructure, general-equilibrium perspectives suggest that widespread adoption can carry macroeconomic implications for money and credit \citep{bojaj-2022}. For fiat-backed issuers, rate regimes move the opportunity cost of holding cash relative to short-duration government bills. When redemption risk is elevated, the opportunity cost of liquidity is not static; it shifts with the interest-rate environment in ways that directly influence the optimal cash share.

Connectedness across stablecoins introduces sector-level channels of stress transmission and substitution. \cite{thanh-2022} propose complementary stability metrics and show that large USD-backed coins transmit stability conditions to smaller peers. Studies of safe-haven and hedging properties also reveal state dependence across assets and sectors \citep{wang-2020}, with COVID-era work on gold-backed tokens highlighting additional substitution and spillover channels \citep{jalan-2021,aloui-2020,yousaf-2021}. These insights justify scenario designs in which a few dominant issuers shape the overall flow environment faced by the sector.

Stress episodes in non-fiat-backed designs provide quantitative benchmarks for operational frictions, capacity constraints, and governance responses. \cite{kjaer-2021} analyze MakerDAO around March 2020 and document how congestion concentrated liquidations into narrow windows and reduced auction effectiveness. The Terra--Luna collapse illustrates a distinct failure mode. \cite{uhlig-2022} develops a continuous-time account linking burn dynamics and suspension beliefs; \cite{briola-2022} assemble a high-frequency timeline of the unwind; and recent modeling of algorithmic stablecoin volatility complements these post-mortems \citep{zhao-2021}. While our scope is fiat-backed issuers operating with high-quality reserves, these episodes underscore the need for explicit, stress-aware liquidity policy at intra-day horizons and motivate the inclusion of settlement windows, queueing effects, and friction costs in our objective.

An analysis in \cite{catalini2021setting} documents substantial heterogeneity in reserve practices among major stablecoin issuers in 2021: Tether managed roughly USD 63 billion across a diversified set of instruments with heterogeneous liquidity, whereas USDC held about USD 32 billion entirely in cash; Diem, by contrast, recommended holding ~10\% in cash and ~90\% in short-term government securities. This work extends both the academic literature and current treasury practice by developing a control-theoretic reserve-management framework that adapts the portfolio in real time to observed flows and yields while respecting current legislation and asset-quality constraints. Our policy replaces static, rule-based heuristics with an implementable, settlement-aware, closed-loop design that (i) delivers transparent, window-based rebalancing rules with explicit execution frictions and (ii) maximizes expected carry subject to quantified de-peg penalties and hard cash-availability constraints. 

The remainder of the paper is structured as follows. Section~\ref{sec:system-framework} (Reserve Management Framework) introduces the reserve-eligibility set, the regulatory liquidity hierarchy, the dynamic reserve model with peg dynamics, and the state and control variables used throughout. Within Section~\ref{sec:system-framework}, Subsection~\ref{subsec:dynamic_reserve_model} develops the dynamic reserve model that links flows, liquidity, and peg deviations. Section~\ref{sec:methods} develops the solution approach, including stochastic model predictive control, the Pontryagin conditions that generate feedback policies, and a settlement-window implementation; the sampled-data control used in practice is detailed in Subsection~\ref{subsec:settlement}. Section~\ref{sec:numerical_results} specifies scenarios and parameters, compares the controller with liquidity- and yield-focused benchmarks, and reports sensitivity analyses. Section~\ref{sec:conclusion} concludes.

\section{Reserve Management Framework}
\label{sec:system-framework}

\subsection{Regulatory Requirements}
\label{subsec:regulatory_requirements}

Regulatory authorities worldwide enforce stringent requirements concerning the asset classes permissible as collateral for stablecoin issuers. In this paper, we adopt the U.S. regulatory framework as our primary reference point for modeling purposes. This approach is motivated by several factors: the maturity and transparency of U.S. financial regulation, its substantial global influence, and the availability of comprehensive legislative proposals such as the \textsc{Genius} Act. While our analysis focuses on U.S. regulations, the resulting models readily generalize to other jurisdictions, as most major economies have converged on similar regulatory principles for stablecoins—particularly the requirements for full collateralization through high-quality liquid assets and robust redemption guarantees. Under the U.S. regulatory framework, Title VII, Section 604(b)(1) of the Guiding and Establishing National Innovation for U.S. Stablecoins Act of 2025 (the "GENIUS Act") delineates six categories of high-quality, liquid assets eligible to serve as collateral for payment stablecoins on a strict 1:1 basis \cite{GENIUSAct2025}:
\begin{enumerate}
    
\item \textbf{U.S.\ currency and balances held directly at a Federal Reserve Bank}

\item \textbf{Demand deposits at FDIC-insured depository institutions}; that is, checking or transaction accounts held at commercial banks, insured by the U.S.\ government against default.

\item \textbf{U.S.\ Treasury securities with residual maturity~$\leq90$\,days}; that is, short-term federal debt instruments.

\item \textbf{Overnight reverse-repo agreements backed by Treasuries, cleared tri-party or with a highly credit-worthy counterparty}; i.e., short-term lending arrangements where the issuer temporarily exchanges cash for Treasury collateral, typically settled via a clearing bank.

\item \textbf{Government-only money-market fund shares invested solely in assets~(1)–(4)}; that is, pooled investment vehicles offering liquidity and diversification while limiting holdings to ultra-safe, short-term instruments.

\item \textbf{Tokenised forms of any asset in~(1)–(5), excluding repo contracts}; that is, cryptographic representations of reserve assets issued on a blockchain, provided the underlying remains fully redeemable and auditable.

\end{enumerate}

These six categories establish a natural liquidity hierarchy that reflects different settlement windows and operational characteristics. At the apex of this hierarchy are cash balances and Federal Reserve master accounts, which provide instantaneous settlement capability but generate no yield, creating an inflation drag on reserves while offering complete elimination of market risk. Shortterm U.S. Treasury securities with maturities of 90 days or less generally settle within standard banking hours, providing substantially higher yields that currently approximate 5.4\% p.a. while maintaining deep market liquidity and qualifying as Level-1 High Quality Liquid Assets under Basel III regulations. However, these instruments introduce duration risk and potential fire-sale losses during periods of intense redemption pressure.
The remaining asset classes exhibit varying combinations of liquidity constraints and settlement windows that create a graduated spectrum of operational flexibility. FDIC-insured demand deposits typically process during Fedwire operating windows, while government money market fund shares offer same-day liquidity under normal market conditions but may impose restrictions during periods of market stress. Tokenized representations of these assets provide 24/7 trading capability through blockchain networks, though final settlement of the wrapped asset remains constrained by the operational windows of the native market.

Throughout, we model cash as operationally
available on demand. In real treasury operations, "cash" often means balances
at commercial banks or custodians and can therefore carry counterparty and access risk. This is
consistent with episodes where a stablecoin’s secondary-market price deviated
from par due to perceived custodial risk even when the issuer remained solvent.
Operationally, issuers mitigate this by (i) diversifying transaction-account
balances across multiple depository institutions, (ii) preferring central-bank
accounts where available, (iii) sweeping excess cash into Treasury-only vehicles
with same-day liquidity, and (iv) maintaining committed liquidity lines for
intraday settlement shocks.

The fundamental trade-off between cash and Treasury securities illustrates the central tension in stablecoin reserve management. Cash holdings maximize liquidity and eliminate market risk but impose opportunity costs through foregone interest income, particularly pronounced in rising rate environments. Treasury securities offer attractive risk-adjusted returns and maintain high liquidity under normal conditions but introduce duration risk and volatility during market dislocations. The other permitted asset classes present similar risk-return considerations, each with specific additional complexities such as smart contract and legal title risks introduced by tokenized wrappers.

Throughout our analysis, we adopt benchmark parameters that reflect current market conditions, with the risk-free cash rate set at zero as vault cash earns no interest, while three-month Treasury securities yield approximately 5.4\% p.a. based on prevailing yield curves.\footnote{Indicative annualized yields as of 2Aug2025.}

\subsection{A Dynamic Reserve Model}
\label{subsec:dynamic_reserve_model}

\subsubsection{Demand And Peg Stability}
\label{subsubsec:demand_peg_stability}

The peg deviation $\Delta P(t) \in \mathbb{R}$ measures the stablecoin's price displacement from its \$1 target, where $\Delta P(t) > 0$ indicates the coin trades below par. We model the stablecoin's mint and redemption dynamics as self-exciting processes that capture the empirical tendency for large transactions to cluster in time, while the peg deviation evolves deterministically based on liquidity shortfalls. Let $N_{R}(t)$ and $N_{M}(t)$ denote the cumulative redemption and mint counts up to time $t$, each following a univariate Hawkes process with stochastic intensities

\begin{align}
  \lambda_{R}(t) &= \lambda_{0,R}
    + \int_{0}^{t} \kappa_R e^{-\theta_R(t-s)} \,\mathrm{d}N_{R}(s)
    + \int_{0}^{t} \zeta\, e^{-\theta_R(t-s)} \Delta P(s)^2\,\mathrm{d}s, \\
  \lambda_{M}(t) &= \lambda_{0,M}
    + \int_{0}^{t} \kappa_M e^{-\theta_M(t-s)} \,\mathrm{d}N_{M}(s).
\end{align}
where $\lambda_{0,R}, \lambda_{0,M} \geq 0$ are baseline intensities, $\kappa_R, \kappa_M > 0$ capture self-excitation (each event increases the likelihood of subsequent events), and $\theta_R, \theta_M > 0$ govern the exponential decay of this excitation. The redemption intensity includes an additional feedback term driven by $\Delta P(s)^2$, reflecting how peg instability accelerates redemptions. This mechanism endogenously models bank-run dynamics characterized by self-reinforcing distrust in the currency's stability.

Let $M(t)$ and $R(t)$ denote cumulative dollar volumes of mints and redemptions, with instantaneous net redemption flow $Q(t) = \dot{R}(t) - \dot{M}(t)$. These dollar flows arise from the counting processes via marked point process representations:
\begin{equation}
\label{eq:marks-executed}
\mathrm{d}M(t) = \int_{\mathbb{R}_+} z \, N_M(\mathrm{d}t, \mathrm{d}z),
\qquad
\mathrm{d}R(t) = \int_{\mathbb{R}_+} \min\bigl(z,\, S_{\mathrm{out}}(t^-)\bigr)\, N_R(\mathrm{d}t, \mathrm{d}z),
\end{equation}
where marks $z$ represent individual transaction sizes with finite first moments. 
To rule out mechanically infeasible paths in which cumulative executed redemptions exceed the outstanding supply, we truncate each redemption mark at the remaining supply. The outstanding supply evolves as
\begin{equation}
\label{eq:Sout-def}
S_{\mathrm{out}}(t)=S_{\mathrm{out}}(0)+M(t)-R(t), \qquad S_{\mathrm{out}}(0)>0,
\end{equation}
where the left-limit $S_{\mathrm{out}}(t^-)$ in \eqref{eq:marks-executed} ensures that executed redemptions satisfy $R(t)\le S_{\mathrm{out}}(0)+M(t)$ and hence $S_{\mathrm{out}}(t)\ge 0$ for all $t$.

The conditional expected net cash outflow rate is
\begin{equation} \label{eq:Qhat-def}
  \hat{Q}(t) := \mathbb{E}\!\left[Q(t)\mid\mathcal{F}_t\right]
             = \bar z_R\,\lambda_R(t) - \bar z_M\,\lambda_M(t),
\end{equation}
where $\bar z_R$ and $\bar z_M$ denote the mean redemption and mint sizes,
respectively.  Over a settlement horizon $\tau_{\mathrm{set}}>0$, the
quantity $\tau_{\mathrm{set}}\,[\hat{Q}(t)]_+$ represents the expected
cumulative cash drain that remains after netting mints against
redemptions.  Comparing this drain with the immediately available cash
reserve $R_{\mathrm{liq}}(t)$ yields a normalized liquidity shortfall
\begin{equation} \label{eq:DeltaPfair-def}
  \Delta P_{\mathrm{fair}}(t)
  := \mathbf{1}_{\{S_{\mathrm{out}}(t)>0\}}
     \left[
       \frac{\tau_{\mathrm{set}}\,[\hat{Q}(t)]_+ - R_{\mathrm{liq}}(t)}
            {S_{\mathrm{out}}(t)}
     \right]_{0}^{1},
\end{equation}
where $[x]_+ := \max\{x,0\}$ and $[x]_{0}^{1}:=\min\{1,\max\{x,0\}\}$.
The quantity $\Delta P_{\mathrm{fair}}(t)$ measures the fraction of
outstanding supply whose redemption cannot be settled from current liquid
reserves within the relevant horizon. We interpret $\Delta P_{\mathrm{fair}}(t)$ as
the fair-value peg deviation that would prevail if the secondary-market
price instantaneously reflected this settlement-liquidity gap.

The actual peg deviation then evolves deterministically through
first-order mean reversion toward $\Delta P_{\mathrm{fair}}(t)$ at rate
$\eta > 0$, supplemented by a corrective term proportional to the
mint/burn fee spread $\delta(t)$:
\begin{equation} \label{eq:DeltaP}
  \dot{\Delta P}(t)
  = \eta\!\Big(\Delta P_{\mathrm{fair}}(t) - \Delta P(t)\Big)
    - \gamma\,\delta(t).
\end{equation}
The parameter $\eta > 0$ governs this sensitivity, while $\gamma > 0$ determines how effectively fees can counteract peg deviations. This deterministic relationship makes peg stability an operational challenge: maintaining $\Delta P(t) \approx 0$ requires either sufficient liquidity buffers or appropriate fee policies to offset redemption pressure.

\subsubsection{Reserve-Balance State Model}
\label{subsubsec:reverse_balance_state_model}

 In the remaining text, we make the simplifying assumption that the coin issuer employs only vault cash (categories 1-2 from subsection \ref{subsec:regulatory_requirements}) and short-dated Treasury securities (category 3) as reserve assets. This simplification is due to the fact that the remaining permitted asset types primarily represent overnight carries or swaps of these two base assets, serving as operational vehicles rather than fundamentally distinct reserve instruments. The stablecoin issuer's operational challenge involves dynamically allocating reserves between immediately available cash and yield-bearing Treasury securities while maintaining peg stability. We formalize this as a control problem with state vector
\begin{equation}
x(t) = \begin{bmatrix}
R_{\mathrm{liq}}(t) \\[2pt]
R_{\mathrm{bill}}(t) \\[2pt]
\Delta P(t) \\[2pt]
\lambda_R(t) \\[2pt]
\lambda_M(t)
\end{bmatrix},
\end{equation}
where $R_{\mathrm{liq}}(t)$ represents immediately available cash reserves, $R_{\mathrm{bill}}(t)$ denotes the face value of short-dated Treasury securities, $\Delta P(t)$, and $\lambda_R(t)$, $\lambda_M(t)$ are the Hawkes intensities governing redemption and mint clustering. The controller manages this system through
\begin{equation}
u(t) = \begin{bmatrix} \omega(t) \\ \delta(t) \end{bmatrix}
\;\in\;
[-\omega_{\max},\, \omega_{\max}] \times [-\delta_{\max},\, \delta_{\max}],
\label{eq:control-box}
\end{equation}
where \( \omega(t) \) represents the reallocation flow between cash and bills (positive for cash-to-bills, negative for bills-to-cash), and \( \delta(t) \) is the mint/burn fee spread.

The reserve balances evolve according to
\begin{align}
\mathrm{d}R_{\mathrm{liq}}(t) &= r_{\mathrm{cash}} R_{\mathrm{liq}}(t) \mathrm{d}t - Q(t) \mathrm{d}t - \omega(t) \mathrm{d}t, \label{eq:Rliq}\\[4pt]
\mathrm{d}R_{\mathrm{bill}}(t) &= r_{\mathrm{bill}}(t) R_{\mathrm{bill}}(t) \mathrm{d}t + \omega(t) \mathrm{d}t. \label{eq:Rbill}
\end{align}
For tractability within control horizons $[t_k, t_{k+1}]$, we treat the T-bill rate as piecewise constant: $r_{\mathrm{bill}}(t) \equiv r_{\mathrm{bill}}^{(k)}$ for $t \in [t_k, t_{k+1}]$. Our baseline assumes $r_{\mathrm{cash}} = 0$, reflecting that vault cash and Federal Reserve master account balances earn no interest, though the framework accommodates $r_{\mathrm{cash}} \neq 0$ without structural changes.

We impose the constraint $R_{\mathrm{bill}}(t)\ge 0$ for all $t$, since the bill bucket cannot be negative. In contrast, the liquid cash position can be temporarily negative in practice due to intraday settlement mismatches and/or the use of committed liquidity lines. Accordingly, we allow $R_{\mathrm{liq}}(t)\in\mathbb{R}$. Combined with the bounded reallocation rate $|\omega(t)| \leq \omega_{\max}$, the constraints endogenize the fundamental liquidity-yield trade-off: higher Treasury allocations increase returns but reduce the buffer against redemption surges, potentially triggering peg instability through the mechanism in equation \eqref{eq:DeltaP}. The reserve base satisfies $R_{\mathrm{liq}}(t) + R_{\mathrm{bill}}(t) \ge S_{\mathrm{out}}(t)$ for all $t\ge 0$ whenever the inequality holds at $t=0$, since interest accrual on Treasury holdings generates an excess $r_{\mathrm{bill}}\,R_{\mathrm{bill}}\ge 0$, which means the issuer is overcollateralized by the cumulative interest earned.

\subsubsection{Optimal Control Problem}
\label{subsubsec:optimal_control_problem}

The stablecoin issuer seeks to maximize yield while maintaining peg stability and operational efficiency. We formalize this through a cost functional that captures the competing objectives of interest generation, peg quality, user experience, and operational frictions. The instantaneous cost is
\begin{equation}
\ell\bigl(x(t),u(t)\bigr)
=
c_{\mathrm{peg}}\,\Delta P(t)^{2}
\;+\;
c_{\mathrm{fee}}\,\delta(t)^{2}
\;+\;
\lambda_{\omega}\,|\omega(t)|
\;+\;
\frac{1}{2}\rho_{\omega}\,\omega(t)^{2}
\;-\;
r_{\mathrm{bill}}^{(k)}\,R_{\mathrm{bill}}(t),
\label{eq:stage-cost}
\end{equation}
where the terms represent: (i) peg stability penalty $c_{\mathrm{peg}} \Delta P(t)^2$ monetizing reputational and convertibility risks from price deviations; (ii) fee friction $c_{\mathrm{fee}} \delta(t)^2$ discouraging excessive mint/burn spreads that degrade user experience; (iii) reallocation cost $\lambda_{\omega}\,|\omega(t)|
\;+\;
\frac{1}{2}\rho_{\omega}\,\omega(t)^{2}$ capturing operational frictions and market impact from portfolio adjustments; and (iv) interest benefit $-r_{\mathrm{bill}}^{(k)} R_{\mathrm{bill}}(t)$ representing carry from Treasury holdings. The quadratic penalties ensure convexity while the linear yield term creates the fundamental tension between liquidity provision and return generation.

The infinite-horizon stochastic optimal control problem is to select an admissible control process $u(\cdot) = (\omega(\cdot), \delta(\cdot))$ minimizing the expected discounted cost:
\begin{equation}
J_{\infty}(x(0)) = \min_{u(\cdot) \in \mathcal{U}} \mathbb{E}\left[ \int_0^{\infty} e^{-\rho t} \ell(x(t), u(t)) \mathrm{d}t \right],
\label{eq:OCP-infinite}
\end{equation}
subject to the system dynamics
\begin{align}
\mathrm{d}R_{\mathrm{liq}}(t) &= r_{\mathrm{cash}} R_{\mathrm{liq}}(t) \mathrm{d}t - Q(t) \mathrm{d}t - \omega(t) \mathrm{d}t, \label{eq:dyn-Rliq}\\
\mathrm{d}R_{\mathrm{bill}}(t) &= r_{\mathrm{bill}}^{(k)} R_{\mathrm{bill}}(t) \mathrm{d}t + \omega(t) \mathrm{d}t, \label{eq:dyn-Rbill}\\
\dot{\Delta P}(t)
&=
\eta\big(\Delta P_{\mathrm{fair}}(t)-\Delta P(t)\big)-\gamma\,\delta(t),
\qquad \Delta P_{\mathrm{fair}}(t)\ \text{as in \eqref{eq:DeltaPfair-def}}.\\
\mathrm{d}\lambda_R(t) &= -\theta_R(\lambda_R(t) - \lambda_{0,R}) \mathrm{d}t + \kappa_R \mathrm{d}N_R(t) + \zeta \Delta P(t)^2 \mathrm{d}t, \label{eq:dyn-lamR}\\
\mathrm{d}\lambda_M(t) &= -\theta_M(\lambda_M(t) - \lambda_{0,M}) \mathrm{d}t + \kappa_M \mathrm{d}N_M(t), \label{eq:dyn-lamM}
\end{align}
where the net redemption flow $Q(t) = \dot{R}(t) - \dot{M}(t)$ arises from marked Hawkes processes with intensities $\lambda_R(t)$, $\lambda_M(t)$ and trade-size marks of finite first moment.

Admissible controls must satisfy the operational constraints \eqref{eq:control-box}, be progressively measurable with respect to the filtration $\{\mathcal{F}_t\}$, and maintain state feasibility with $R_{\mathrm{bill}}(t)\ge 0$.

The infinite-horizon optimization problem faces several well-posedness challenges: convergence of the cost integral despite potentially unbounded Treasury growth, the non-smooth clipping in the fair-value mapping
$\Delta P_{\mathrm{fair}}(t)$ defined in~\eqref{eq:DeltaPfair-def} appearing in the peg dynamics, and potential explosive paths arising from the peg-redemption feedback loop. We address these challenges through the following conditions: (i) a positive discount rate $\rho > 0$ with bounded interest rates; (ii) subcritical Hawkes processes satisfying $\kappa_R/\theta_R < 1$ and $\kappa_M/\theta_M < 1$, which guarantee stationary intensities and finite first moments; (iii) bounded controls that maintain nonnegative reserves; and (iv) the balance sheet constraint that prevents unbounded carry extraction by tying reserves to outstanding supply.

A crucial observation is that the problem admits a finite-value admissible policy, which serves as a witness for well-posedness. Since every outstanding coin is backed one-to-one, we can construct a peg-neutral policy with finite discounted cost. Consider the all-cash policy defined by
\begin{equation}
u^{\mathrm{cash}}(t) = (\omega(t), \delta(t)) \equiv (0, 0), \quad R_{\mathrm{bill}}(t) \equiv 0, \quad R_{\mathrm{liq}}(t) = S_{\mathrm{out}}(t),
\end{equation}
where the issuer holds the full reserve in immediately available cash without trading between buckets or charging fees.

Under this all-cash policy, immediately available cash equals outstanding supply at all times, so
\[
\Delta P_{\mathrm{fair}}(t)
=
\left[1-\frac{R_{\mathrm{liq}}(t)}{S_{\mathrm{out}}(t)}\right]_+
=
\left[1-\frac{S_{\mathrm{out}}(t)}{S_{\mathrm{out}}(t)}\right]_+
=0.
\]
The peg dynamics therefore reduce to
\[
\dot{\Delta P}(t) = -\eta\,\Delta P(t) - \gamma\,\delta(t).
\]
With $\delta(t)\equiv 0$ and the natural initialization $\Delta P(0)=0$, we obtain $\Delta P(t)\equiv 0$ for all $t\ge 0$.

For this admissible trajectory, the running cost simplifies to
\begin{equation}
\ell\bigl(x(t), u^{\mathrm{cash}}(t)\bigr) = c_{\mathrm{peg}} \cdot 0 + c_{\mathrm{fee}} \cdot 0 + \lambda_{\omega} \cdot 0 + \tfrac{1}{2}\rho_{\omega} \cdot 0 - r_{\mathrm{bill}}(t) \cdot 0 = 0,
\end{equation}
and consequently the discounted infinite-horizon objective equals
\begin{equation}
J_\infty\bigl(x(0); u^{\mathrm{cash}}\bigr) = \int_0^\infty e^{-\rho t} \cdot 0 \, dt = 0 < \infty.
\end{equation}

Therefore, the value function is finite at $x(0)$ and the optimization problem is proper: the infimum of the discounted cost is not $+\infty$ and a minimizing sequence exists. This witness policy demonstrates why the peg-redemption feedback cannot cause explosive costs in our formulation - there exists an admissible control that nullifies the fair-value deviation term $\Delta P_{\mathrm{fair}}$ in the peg law, keeping $\Delta P$ identically zero with no associated cost.

Together with the existence of the admissible peg-neutral policy demonstrated above, these conditions ensure that the discounted infinite-horizon problem is well-posed: the value function remains finite for the initial state, and the state-control constraints admit at least one trajectory with bounded cost. This establishes an economically meaningful baseline of perfect convertibility with zero peg error, against which all other policies can be evaluated within the same discounted framework. 

\subsubsection{Assumptions and robustness discussion}
\label{subsubsec:assumptions_robustness}
The model is deliberately stylized to yield an auditable, settlement-aware feedback rule rather than a fully structural description of stablecoin secondary-market pricing. This subsection discusses which assumptions are primarily adopted for tractability, how they can affect quantitative outputs, and how they may be relaxed while preserving the qualitative policy logic (threshold rebalancing and monotone stress response):

\begin{enumerate}
\item \textit{Flow model (Hawkes) and calibration.}
We use exponential Hawkes processes because they reproduce clustered order flow
with a Markovian intensity state, which is what enables the moment-closure
forecast and the MPC implementation. Empirically, the Hawkes parameters
$(\lambda_{0,\cdot},\kappa_{\cdot},\theta_{\cdot})$ can in principle be estimated directly from on-chain mint/burn timestamps and updated over rolling windows. These estimates can be fed into the filtered intensity updates
\eqref{eq:jump}-\eqref{eq:decay}.
We emphasize that the degree of self-excitation (branching ratios
$\kappa/\theta$) is quantitatively important because it affects the speed and
persistence of clustered redemptions and therefore the optimal liquidity buffer.
Our scenario design partially addresses this sensitivity by varying both baseline
intensities and moving the redemption branching ratio close to
criticality, which should be interpreted as a robustness check over
plausible flow regimes rather than a point-calibrated empirical claim.

\item \textit{Objective specification and sensitivity.}
The stage cost \eqref{eq:stage-cost} is intended as a reduced-form aggregation
of (i) peg-quality and reputational costs, (ii) user-friction costs from fees,
(iii) trading frictions/market impact, and (iv) bill carry. The calibration in
Section~\ref{subsec:parameter_estimation} should be interpreted as illustrative scaling chosen to match operational tolerances.
While alternative penalty choices would shift numerical thresholds and cash targets, the
qualitative control form is robust to these
variations. The framework is designed to be re-calibrated to each issuer's
mandate rather than used with universal weights.

\item \textit{Interpretation and limitation (confidence-driven de-pegs).}
Equation~\eqref{eq:DeltaP} models an operational liquidity-driven
component of de-pegging: $\Delta P$ increases when redemptions exceed
immediately available settlement liquidity within the relevant window. Historical
episodes suggest that secondary-market prices can also deviate from par due to
anticipatory fear even when
redemptions remain available. Our formulation intentionally isolates the
liquidity/settlement channel because it is the part most directly controllable
by treasury allocation and execution timing.
\end{enumerate}

\section{Analytical and Computational Methods}
\label{sec:methods}

\subsection{Stochastic Model Predictive Control}
\label{subsec:smpc-framework}

The infinite-horizon optimal control problem formulated above presents significant computational challenges for real-time implementation. This motivates our adoption of Model Predictive Control (MPC), a rolling optimization strategy that approximates the infinite-horizon solution through repeated finite-period problems. Model Predictive Control implements the following strategy: at each decision time $t_k$, we solve a finite-horizon optimal control problem using the current state and forecasts of future uncertainty. Only the initial portion of the computed optimal trajectory is implemented before re-solving at the next decision point $t_{k+1}$ with updated information. This approach naturally handles constraints while incorporating predictions of future system behavior. For stablecoin reserve management, MPC offers several key advantages: it requires no prior training data, making it immediately applicable without extensive historical datasets; it produces transparent and auditable policies through explicit optimization of well-defined objectives, ensuring regulatory compliance; it naturally incorporates forward-looking capability to anticipate redemption clusters and settlement windows; and it effectively handles high-dimensional continuous state spaces and hybrid jump dynamics through direct optimization.

At each roll time $t_k$, we solve the finite-horizon stochastic optimal control problem
\begin{equation}
\min_{u(\cdot)\in\mathcal{U}} J_k = \mathbb{E}\left[ \int_{t_k}^{t_k+T} e^{-\rho(t-t_k)} \ell(x(t), u(t)) \mathrm{d}t \;\Big|\; \mathcal{F}_{t_k} \right]
\label{eq:smpc-ocp}
\end{equation}
subject to the system dynamics and admissibility constraints. The filtration $\mathcal{F}_{t_k}$ represents all information available at time $t_k$, including the complete history of mint and redemption events, current reserve levels, and realized peg deviations. As a measure-theoretic object, $\mathcal{F}_{t_k}$ is the $\sigma$-algebra generated by all observable processes up to time $t_k$, ensuring that our control decisions are non-anticipating—they depend only on past and present information, not future realizations.

Our stochastic MPC framework reduces the stochastic optimization problem to a deterministic equivalent through moment-based approximation of the underlying stochastic processes. Rather than propagating full probability distributions or sampling scenarios, we replace all random quantities by their conditional expectations given $\mathcal{F}_{t_k}$. This transformation occurs at two levels. First, between MPC rolls, we maintain filtered estimates of the Hawkes intensities using the recursive updates
\begin{align}
\lambda_X(t^+) &= \lambda_X(t^-) + \kappa_X &&\text{(at event)}, \label{eq:jump} \\
\lambda_X(t+\Delta)^- &= \lambda_{0,X} + \big(\lambda_X(t^+) - \lambda_{0,X}\big)e^{-\theta_X \Delta} 
&&\text{(between events)}. \label{eq:decay}
\end{align}
for $X \in \{R,M\}$. These updates are deterministic given the observed event stream, providing initial conditions $\lambda_R(t_k^+)$ and $\lambda_M(t_k^+)$ for the planning phase.

Within the optimization horizon $[t_k, t_k+T]$, we must propagate conditional mean intensities to compute the expected redemption flow $\hat{Q}(t)$ at each instant. Simply using initial intensities would ignore the predictable evolution of redemption pressure over the planning horizon. The conditional mean intensities evolve according to
\begin{align}
\dot{\hat{\lambda}}_R(t) &= -\theta_R(\hat{\lambda}_R(t) - \lambda_{0,R}) + \kappa_R \hat{\lambda}_R(t) + \zeta \Delta P(t)^2, \quad \hat{\lambda}_R(t_k) = \lambda_R(t_k^+), \\
\dot{\hat{\lambda}}_M(t) &= -\theta_M(\hat{\lambda}_M(t) - \lambda_{0,M}) + \kappa_M \hat{\lambda}_M(t), \quad \hat{\lambda}_M(t_k) = \lambda_M(t_k^+),
\end{align}
where the self-excitation terms $\kappa_X \hat{\lambda}_X(t)$ arise from taking expectations of future jump contributions. These ODEs exploit the Markovian property of exponential Hawkes processes: given current intensities, the expected future evolution depends only on these values, not the full history. The expected outstanding supply within the MPC horizon evolves as
\begin{equation}
\label{eq:Sout-hat}
\dot{\hat{S}}_{\mathrm{out}}(t)=\bar z_M \hat{\lambda}_M(t)-\bar z_R \hat{\lambda}_R(t)=-\hat{Q}(t),
\qquad \hat{S}_{\mathrm{out}}(t_k)=S_{\mathrm{out}}(t_k).
\end{equation} Substituting these deterministic forecasts into the system dynamics yields
\begin{align}
\dot{R}_{\mathrm{liq}}(t) &= r_{\mathrm{cash}} R_{\mathrm{liq}}(t) - \hat{Q}(t) - \omega(t), \\
\dot{R}_{\mathrm{bill}}(t) &= r_{\mathrm{bill}}^{(k)} R_{\mathrm{bill}}(t) + \omega(t), \\
\dot{\Delta P}(t)
&=
\eta\big(\hat{\Delta P}_{\mathrm{fair}}(t)-\Delta P(t)\big)-\gamma\,\delta(t),\\
\hat{\Delta P}_{\mathrm{fair}}(t)
&:=
\mathbf{1}_{\{\hat{S}_{\mathrm{out}}(t)>0\}}
\left[
\frac{\tau_{\mathrm{set}} \,[\hat{Q}(t)]_+ - R_{\mathrm{liq}}(t)}{\hat{S}_{\mathrm{out}}(t)}
\right]_{0}^{1}.
\end{align}
where $\hat{S}_{\mathrm{out}}(t)$ evolves according to the expected mint and redemption flows. The computed trajectory $u^*(t)$ for $t \in [t_k, t_k+T]$ represents the optimal response to expected future conditions, with only the initial segment $u^*(t)$ for $t \in [t_k, t_{k+1})$ implemented before re-solving with updated state information.

\subsection{Optimal Control Characterization via Pontryagin's Maximum Principle}

Within each MPC roll, we solve the deterministic surrogate problem obtained by replacing stochastic flows with their conditional expectations. The state vector $x = (R_{\mathrm{liq}}, R_{\mathrm{bill}}, \Delta P)^{\top}$ evolves according to the moment-closed dynamics from Section \ref{subsec:smpc-framework}, while the Hawkes intensities $\hat{\lambda}_R, \hat{\lambda}_M$ enter as exogenous trajectories. 

We seek an optimal control trajectory \(u(\cdot)\) from a suitable function space, such as the Banach space of square-integrable functions \(L^{2}\!\left([0,T], \mathbb{R}^{m}\right)\), rather than a finite-dimensional vector. This makes the optimization problem infinite-dimensional. To solve the optimal control problem, Pontryagin's Maximum Principle (PMP) transforms the dynamic optimization into a system of differential equations. The optimal control at each instant must minimize the current-value Hamiltonian
\begin{equation}
H = \ell(x,u) + p^{\top} f(x,u,t),
\end{equation}
which captures the total instantaneous cost: the direct running cost $\ell(x,u)$ plus the marginal effect of the control on future costs through its impact on state evolution $f(x,u,t)$. The optimal control thus balances immediate costs against future consequences. The costates $p = (p_{\mathrm{liq}}, p_{\mathrm{bill}}, p_{\Delta P})^{\top}$ can consequently be interpreted as shadow prices or marginal values of the state variables, measuring how a unit change in each state affects the total objective value going forward.

Following Pontryagin's Maximum Principle for infinite-horizon problems, the optimal trajectory must satisfy the following canonical equations:
\begin{align}
\dot{x}^* &= f(x^*, u^*, t), \quad x^*(0) = x_0, \\
\dot{p} &= \rho p - \nabla_x H(x^*, u^*, p, t) \label{eq:costate}, \\
u^*(t) &\in \arg\min_{u \in \mathcal{U}} H(x^*(t), u, p(t), t),
\end{align}
where the first term in~\eqref{eq:costate} reflects the accumulation of shadow costs at the discount rate, while the second term accounts for their depletion as costs are realized over the infinitesimal interval~$[t, t + dt]$. The transversality condition $\lim_{t \to \infty} e^{-\rho t} p^{\top}(t) x^*(t) = 0$ ensures that the discounted value of the terminal state vanishes, preventing unbounded growth of the state-costate product. Since our cost functional is convex in controls and the reserve dynamics are affine, these necessary conditions from PMP are also sufficient for optimality, guaranteeing that any solution to this system is globally optimal.

The costate dynamics yield explicit expressions for the shadow prices. For the current-value Hamiltonian
\begin{align}
H &= c_{\mathrm{peg}} \Delta P^2 + c_{\mathrm{fee}} \delta^2 + \lambda_{\omega}|\omega| + \tfrac{1}{2}\rho_{\omega} \omega^2 - r_{\mathrm{bill}} R_{\mathrm{bill}} \nonumber\\
&\quad + p_{\mathrm{liq}}(r_{\mathrm{cash}} R_{\mathrm{liq}} - \hat{Q} - \omega) + p_{\mathrm{bill}}(r_{\mathrm{bill}} R_{\mathrm{bill}} + \omega) \nonumber\\
&\quad + p_{\Delta P}\!\left(
\eta\!\left(
\mathbf{1}_{\{\hat{S}_{\mathrm{out}}>0\}}
\!\left[
\frac{\tau_{\mathrm{set}}[\hat{Q}]_+ - R_{\mathrm{liq}}}
     {\hat{S}_{\mathrm{out}}}
\right]_0^1
\!- \Delta P
\right)
-\gamma\,\delta
\right),
\intertext{the costates satisfy (almost everywhere at points where $R_{\mathrm{liq}} \neq \hat{S}_{\mathrm{out}}$)}
\dot{p}_{\mathrm{liq}}
&=
\rho p_{\mathrm{liq}}-\frac{\partial H}{\partial R_{\mathrm{liq}}}
=
(\rho-r_{\mathrm{cash}})p_{\mathrm{liq}}
+
p_{\Delta P}\,\eta\,
\frac{\mathbf{1}_{\{\tau_{\mathrm{set}}[\hat{Q}(t)]_+>R_{\mathrm{liq}}\}}}{\hat{S}_{\mathrm{out}}(t)},
 \\
\dot{p}_{\mathrm{bill}} &= \rho p_{\mathrm{bill}} - \frac{\partial H}{\partial R_{\mathrm{bill}}} = (\rho - r_{\mathrm{bill}}) p_{\mathrm{bill}} + r_{\mathrm{bill}}, \\
\dot{p}_{\Delta P}
&= \rho p_{\Delta P} - \frac{\partial H}{\partial \Delta P}
= (\rho+\eta)\,p_{\Delta P} - 2c_{\mathrm{peg}}\,\Delta P,
\end{align}
where $\mathbf{1}_{\{\tau_{\mathrm{set}}[\hat{Q}(t)]_+>R_{\mathrm{liq}}\}}$ indicates a forecast cash shortfall over the settlement horizon.

The first-order conditions for control optimality yield
\begin{align}
0 &\in \partial_{\omega} H = \lambda_{\omega} \partial|\omega| + \rho_{\omega} \omega - p_{\mathrm{liq}} + p_{\mathrm{bill}}, \label{eq:omega-stationarity} \\
0 &= \frac{\partial H}{\partial \delta} = 2c_{\mathrm{fee}} \delta - \gamma p_{\Delta P}.
\end{align}
Thus, the fee control admits the linear feedback law
\begin{equation}
\delta^*(t) = \text{Proj}_{[-\delta_{\max}, \delta_{\max}]}\left(\frac{\gamma}{2c_{\mathrm{fee}}} p_{\Delta P}(t)\right),
\end{equation}
where projection operator $\text{Proj}_{[a,b]}(\cdot)$ clips values to the constraint interval. The reallocation control exhibits more sophisticated structure due to the $\ell_1$ trading cost (the $\lambda_{\omega}|\omega|$ term).

Define the switching function
\begin{equation}
S_{\omega}(t) = p_{\mathrm{bill}}(t) - p_{\mathrm{liq}}(t),
\end{equation}
which represents the instantaneous net opportunity cost of holding bills rather than cash.  
Solving the stationarity condition~\eqref{eq:omega-stationarity} for $\omega(t)$ yields
\begin{equation}
\omega^*(t) = \mathrm{Proj}_{[-\omega_{\max},\, \omega_{\max}]}
\left( -\frac{1}{\rho_{\omega}}\,\mathrm{shrink}\big(S_{\omega}(t), \lambda_{\omega}\big) \right),
\end{equation}
where $\mathrm{shrink}(z, \lambda) = \mathrm{sign}(z)\,\max\{|z| - \lambda,\, 0\}$ denotes the soft-thresholding operator.   This implies that the optimal portfolio reallocation control $\omega$ at time $t$ follows a \emph{threshold policy}:  
if the absolute marginal benefit of reallocation does not exceed the transaction cost $(|S_{\omega}(t)| \leq \lambda_{\omega})$, no reallocation within the reserve asset portfolio occurs $(\omega^*(t) = 0)$.  
If the marginal benefit exceeds the transaction-cost threshold $(|S_{\omega}(t)| > \lambda_{\omega})$, reallocation occurs with magnitude increasing in $|S_{\omega}(t)|-\lambda_{\omega}$ and capped at $\omega_{\max}$ by the projection, i.e.\ $0<|\omega^*(t)|\le \omega_{\max}$ with $\mathrm{sign}(\omega^*(t))=-\mathrm{sign}(S_{\omega}(t))$.

\subsection{Implementation with Settlement Windows}
\label{subsec:settlement}

\subsubsection{A Discretized Optimal Control Policy}
\label{subsubsec:discretized_ocp}

The continuous-time event-triggered policy, while theoretically elegant, requires potential control updates at every mint or redemption event—impractical given typical transaction volumes. We therefore develop a sampled-data formulation that aligns with operational realities and provides computational tractability.

Let $0 = \tau_0 < \tau_1 < \tau_2 < \cdots$ denote predetermined settlement or review times (e.g., Fedwire windows, hourly desk reviews) with intervals $\Delta_j = \tau_{j+1} - \tau_j$. We constrain controls to be piecewise constant over these windows:
\begin{equation}
\omega(t) \equiv \omega_j, \quad \delta(t) \equiv \delta_j \quad \text{for } t \in [\tau_j, \tau_{j+1}).
\end{equation}
This restriction reflects a real operational constraint, as treasury trades typically execute in batches at settlement windows, and defines the admissible control set $\mathcal{U}_{\mathrm{sd}}$.

Within this framework, the Hamiltonian contribution from reallocation over window $j$ becomes
\begin{equation}
\int_{\tau_j}^{\tau_{j+1}} \left(\lambda_{\omega}|\omega_j| + \tfrac{1}{2}\rho_{\omega}\omega_j^2 + (p_{\mathrm{bill}}(t) - p_{\mathrm{liq}}(t))\omega_j\right) dt.
\end{equation}
Define
\begin{equation}
S_j := \int_{\tau_j}^{\tau_{j+1}} (p_{\mathrm{bill}}(t) - p_{\mathrm{liq}}(t)) dt,
\label{eq:avg-switch}
\end{equation}
which represents the accumulated relative value of bills versus cash over the window.

\begin{theorem}[Window-Based Optimal Control]
The optimal sampled-data controls over window $j$ are:
\begin{align}
\omega_j^* &= \text{Proj}_{[-\omega_{\max}, \omega_{\max}]}\left(-\frac{1}{\rho_{\omega}\Delta_j} \text{shrink}(S_j, \lambda_{\omega}\Delta_j)\right), \label{eq:omega-window}\\
\delta_j^* &= \text{Proj}_{[-\delta_{\max}, \delta_{\max}]}\left(\frac{\gamma}{2c_{\mathrm{fee}}} \frac{1}{\Delta_j}\int_{\tau_j}^{\tau_{j+1}} p_{\Delta P}(t) dt\right),
\end{align}
where $\text{shrink}(z, \lambda) = \text{sign}(z)\max\{|z| - \lambda, 0\}$.
\end{theorem}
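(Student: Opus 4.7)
The plan is to apply Pontryagin's Maximum Principle to the sampled-data restriction of the deterministic surrogate problem, observing that constraining $(\omega,\delta)$ to the class $\mathcal{U}_{\mathrm{sd}}$ of piecewise constant controls on $\{[\tau_j,\tau_{j+1})\}_j$ collapses the pointwise-in-time Hamiltonian minimization into a finite-dimensional convex subproblem on each window. Because the stage cost is convex in $u$ and the reserve dynamics are affine in $u$ (as already invoked in the continuous-time derivation), the first-order conditions are both necessary and sufficient, so it suffices to characterize the window-wise minimizers with the costate trajectory $p(\cdot)$ held fixed.

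First I would isolate the $\omega_j$-dependent contribution to the window-integrated Hamiltonian. Only the stage-cost terms $\lambda_\omega|\omega|+\tfrac{1}{2}\rho_\omega\omega^2$ and the reserve-coupling terms $(p_{\mathrm{bill}}(t)-p_{\mathrm{liq}}(t))\omega$ depend on $\omega$; with $\omega$ constant on the window, the time dependence in the coupling collapses into the scalar $S_j$ from \eqref{eq:avg-switch}, reducing the problem to minimizing $\phi_j(\omega_j)=\Delta_j\lambda_\omega|\omega_j|+\tfrac{1}{2}\Delta_j\rho_\omega\omega_j^2+S_j\,\omega_j$ over $\omega_j\in[-\omega_{\max},\omega_{\max}]$. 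The subdifferential optimality condition $0\in\Delta_j\lambda_\omega\,\partial|\omega_j|+\Delta_j\rho_\omega\omega_j+S_j+N_{[-\omega_{\max},\omega_{\max}]}(\omega_j)$ admits a classical closed-form solution: compute the unconstrained proximal step, yielding $-(\Delta_j\rho_\omega)^{-1}\mathrm{shrink}(S_j,\Delta_j\lambda_\omega)$, and then project onto the box. Strong convexity of $\phi_j$ on each branch of the kink guarantees uniqueness and justifies the commutation of the proximal and projection operators, delivering the stated formula for $\omega_j^*$.

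Next I would handle $\delta_j$ by the same reduction. The $\delta$-dependent part of the Hamiltonian, $c_{\mathrm{fee}}\delta^2-\gamma p_{\Delta P}(t)\delta$, integrates over the window to $\Delta_j c_{\mathrm{fee}}\delta_j^2-\gamma\delta_j\int_{\tau_j}^{\tau_{j+1}}p_{\Delta P}(t)\,dt$, a strongly convex smooth quadratic whose unique unconstrained stationary point is $(\gamma/2c_{\mathrm{fee}})\,\Delta_j^{-1}\int_{\tau_j}^{\tau_{j+1}}p_{\Delta P}(t)\,dt$; projection onto $[-\delta_{\max},\delta_{\max}]$ then yields the second formula in the theorem.

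The main obstacle will be justifying that the costates may be treated as exogenous during the window-wise minimization, since in the coupled system they depend on the entire state trajectory, and the indicator $\mathbf{1}_{\{\hat{Q}>R_{\mathrm{liq}}\}}$ appearing in $\dot{p}_{\mathrm{liq}}$ introduces non-smooth dependence on the controls through the state. I would address this via the standard PMP-for-sampled-data argument: write the Lagrangian for the finite-dimensional optimization in $\{(\omega_j,\delta_j)\}_j$ with the ODE constraints adjoined via multipliers satisfying exactly the costate dynamics \eqref{eq:costate}, and observe that the KKT stationarity condition in each $(\omega_j,\delta_j)$ reduces to minimizing the window-integrated Hamiltonian with $p(\cdot)$ held fixed. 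The convexity-plus-affineness remark above then promotes this necessary condition to sufficiency, completing the argument.
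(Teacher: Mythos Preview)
Your proposal is correct and follows essentially the same route as the paper: isolate the $\omega$- and $\delta$-dependent pieces of the window-integrated Hamiltonian, reduce each to a one-dimensional convex minimization with the costate trajectory held fixed, and read off the soft-thresholding (respectively quadratic) solution followed by projection onto the box. If anything your treatment is slightly more complete than the paper's own proof, which carries out the explicit case analysis only for $\omega_j^*$ and does not discuss why the costates may be treated as exogenous in the window-wise minimization; your final paragraph supplies exactly that missing justification via the KKT/adjoint argument.
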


\begin{proof}
We need to minimize over window $j$ the functional
\begin{equation}
J_j(\omega_j) = \int_{\tau_j}^{\tau_{j+1}} \left(\lambda_{\omega}|\omega_j| + \tfrac{1}{2}\rho_{\omega}\omega_j^2 + (p_{\mathrm{bill}}(t) - p_{\mathrm{liq}}(t))\omega_j\right) dt
\end{equation}
subject to $\omega_j \in [-\omega_{\max}, \omega_{\max}]$. Since $\omega_j$ is constant over $[\tau_j, \tau_{j+1})$, we can factor it out:
\begin{equation}
J_j(\omega_j) = \lambda_{\omega}\Delta_j|\omega_j| + \tfrac{1}{2}\rho_{\omega}\Delta_j\omega_j^2 + S_j\omega_j,
\end{equation}
where $S_j = \int_{\tau_j}^{\tau_{j+1}} (p_{\mathrm{bill}}(t) - p_{\mathrm{liq}}(t)) dt$ as defined in \eqref{eq:avg-switch}.

First consider the unconstrained problem. The objective is convex (sum of convex functions), and strictly convex when $\rho_{\omega} > 0$ since the Hessian $\nabla^2 J_j = \rho_{\omega}\Delta_j > 0$.

For optimality, we require $0 \in \partial J_j(\omega_j)$, where $\partial$ denotes the subdifferential. Computing term by term:
\begin{itemize}
\item $\partial_{\omega_j}[\tfrac{1}{2}\rho_{\omega}\Delta_j\omega_j^2] = \rho_{\omega}\Delta_j\omega_j$
\item $\partial_{\omega_j}[-S_j\omega_j] = -S_j$
\item $\partial_{\omega_j}[\lambda_{\omega}\Delta_j|\omega_j|] = \lambda_{\omega}\Delta_j \cdot \partial|\omega_j|$
\end{itemize}

For the absolute value subdifferential at point $\omega_j$:
\begin{equation}
\partial|\omega_j| = \begin{cases}
\{1\} & \text{if } \omega_j > 0 \\
\{-1\} & \text{if } \omega_j < 0 \\
[-1, 1] & \text{if } \omega_j = 0
\end{cases}
\end{equation}

Therefore, the optimality condition $0 \in \partial J_j(\omega_j)$ becomes:
\begin{equation}
0 \in \lambda_{\omega}\Delta_j \cdot \partial|\omega_j| + \rho_{\omega}\Delta_j\omega_j + S_j.
\end{equation}

We analyze three cases:

\textit{Case 1: $\omega_j > 0$.} Then $\partial|\omega_j| = \{1\}$, so:
\begin{equation}
0 = \lambda_{\omega}\Delta_j + \rho_{\omega}\Delta_j\omega_j + S_j \implies \omega_j = \frac{-S_j - \lambda_{\omega}\Delta_j}{\rho_{\omega}\Delta_j}.
\end{equation}
This is valid only if $\omega_j > 0$, i.e., when $S_j < -\lambda_{\omega}\Delta_j$.

\textit{Case 2: $\omega_j < 0$.} Then $\partial|\omega_j| = \{-1\}$, so:
\begin{equation}
0 = -\lambda_{\omega}\Delta_j + \rho_{\omega}\Delta_j\omega_j + S_j \implies \omega_j = \frac{-S_j + \lambda_{\omega}\Delta_j}{\rho_{\omega}\Delta_j}.
\end{equation}
This is valid only if $\omega_j < 0$, i.e., when $S_j > \lambda_{\omega}\Delta_j$.

\textit{Case 3: $\omega_j = 0$.} Then $\partial|\omega_j| = [-1, 1]$, so we need:
\begin{equation}
0 \in \lambda_{\omega}\Delta_j[-1, 1] + 0 + S_j = [-\lambda_{\omega}\Delta_j + S_j, \lambda_{\omega}\Delta_j + S_j].
\end{equation}
This holds if and only if $-\lambda_{\omega}\Delta_j \leq S_j \leq \lambda_{\omega}\Delta_j$, i.e., $|S_j| \leq \lambda_{\omega}\Delta_j$.

Combining all cases, the unconstrained optimizer is:
\begin{equation}
\omega_j^{\text{unc}} = \begin{cases}
\frac{-S_j - \lambda_{\omega}\Delta_j}{\rho_{\omega}\Delta_j} & \text{if } S_j < -\lambda_{\omega}\Delta_j \\[4pt]
0 & \text{if } |S_j| \leq \lambda_{\omega}\Delta_j \\[4pt]
\frac{-S_j + \lambda_{\omega}\Delta_j}{\rho_{\omega}\Delta_j} & \text{if } S_j > \lambda_{\omega}\Delta_j
\end{cases}
\end{equation}

This can be written compactly using the soft-thresholding operator:
\begin{equation}
\omega_j^{\text{unc}} = -\frac{1}{\rho_{\omega}\Delta_j}\text{shrink}(S_j, \lambda_{\omega}\Delta_j),
\end{equation}
where $\text{shrink}(z, \lambda) := \text{sign}(z)\max\{|z| - \lambda, 0\}$. 

The constraint set $[-\omega_{\max}, \omega_{\max}]$ is closed and convex. Since $J_j$ is strictly convex, the constrained minimizer is unique and given by the Euclidean projection:
\begin{equation}
\omega_j^* = \text{Proj}_{[-\omega_{\max}, \omega_{\max}]}(\omega_j^{\text{unc}}) = \text{Proj}_{[-\omega_{\max}, \omega_{\max}]}\left(-\frac{1}{\rho_{\omega}\Delta_j}\text{shrink}(S_j, \lambda_{\omega}\Delta_j)\right),
\end{equation}
which completes the proof.
\end{proof}

This result preserves the essential structure of the continuous-time solution while offering practical advantages. Control decisions require only time-averaged costates $S_j$ and $\bar{p}_{\Delta P,j}$, computed once per settlement window rather than at each transaction, reducing computational complexity from potentially thousands of transactions to typically 6-24 daily windows. The policy naturally aligns with institutional constraints by executing treasury transactions at predetermined windows matching market infrastructure like Fedwire hours and repo settlement times, while the averaged switching function $S_j$ smooths intra-window flow volatility to prevent overreaction to temporary imbalances. The decision rule admits a transparent interpretation: rebalance at each window if the averaged value differential exceeds the transaction cost threshold $\lambda_{\omega}\Delta_j$, otherwise maintain positions. Moreover, the effective inaction threshold automatically adapts to operating conditions, scaling with window length to allow more aggressive rebalancing over longer intervals where costs amortize better, while becoming more conservative during stress periods with frequent reviews.

\subsubsection{Monotone Stress-Response Behavior}
\label{subsubsec:monotone_stress_response}

In this section, we analyze how the window-based optimal control responds to varying levels of redemption pressure. The system favors yield-bearing Treasury holdings during calm periods and shifts toward cash as redemption demand intensifies. The following theorem establishes this behavior through analysis of the costate dynamics:
\begin{theorem}[Monotone Stress-Threshold Principle]
\label{thm:stress-threshold}
Consider the window-based optimal control \eqref{eq:omega-window} over interval $[\tau_j, \tau_{j+1})$. Let $\hat{Q}^{(1)}$ and $\hat{Q}^{(2)}$ be two forecast paths with $\hat{Q}^{(1)}(t) \geq \hat{Q}^{(2)}(t)$ for all $t \in [\tau_j, \tau_{j+1})$. Then:
\begin{enumerate}
\item The optimal reallocation satisfies $\omega_j^*(\hat{Q}^{(1)}) \leq \omega_j^*(\hat{Q}^{(2)})$
\item For any parametric family $\hat{Q}_{\alpha}(t) = \alpha q(t)$ with $q(t) \geq 0$, there exists a threshold $\alpha^* \geq 0$ such that:
\begin{equation}
\omega_j^*(\hat{Q}_{\alpha}) = \begin{cases}
0 & \text{if } \alpha \leq \alpha^* \\
\neq 0 & \text{if } \alpha > \alpha^*
\end{cases}
\end{equation}
\item As $\alpha \to \infty$, $\omega_j^*(\hat{Q}_{\alpha}) \to -\omega_{\max}$ (maximum cash building)
\end{enumerate}
\end{theorem}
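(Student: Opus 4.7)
The plan is to factor the map $\hat{Q}\mapsto\omega_j^*$ through the scalar switching integral $S_j$ from the preceding theorem, and then chain a sequence of standard ODE comparisons through the state–costate system in the order in which $\hat{Q}$ propagates. Since $\mathrm{shrink}(\cdot,\lambda)$ and $\mathrm{Proj}_{[a,b]}(\cdot)$ are both non-decreasing and $1$-Lipschitz, the closed form
\begin{equation*}
\omega_j^* \;=\; \mathrm{Proj}_{[-\omega_{\max},\omega_{\max}]}\!\bigl(-\tfrac{1}{\rho_\omega\Delta_j}\,\mathrm{shrink}(S_j,\lambda_\omega\Delta_j)\bigr)
\end{equation*}
is a continuous non-increasing function of the scalar $S_j$, so part~(1) reduces to proving that $S_j(\hat{Q}^{(1)})\geq S_j(\hat{Q}^{(2)})$ whenever $\hat{Q}^{(1)}(t)\geq\hat{Q}^{(2)}(t)$ on the window.

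To establish this comparison I would propagate the pointwise inequality through the four coupled ODEs in sequence, using a Kamke--M\"uller-type comparison that still applies because all right-hand sides are Lipschitz and quasimonotone despite the non-smooth operators $[\,\cdot\,]_+$ and $\mathbf{1}_{\{\hat{Q}>R_{\mathrm{liq}}\}}$. First, $\hat{Q}$ enters \eqref{eq:dyn-Rliq} with a negative sign, so $R_{\mathrm{liq}}^{(1)}(t)\leq R_{\mathrm{liq}}^{(2)}(t)$. Second, since both the positive part and the indicator preserve order, the driving term of \eqref{eq:dyn-peg} is larger under scenario~$(1)$, giving $\Delta P^{(1)}(t)\geq\Delta P^{(2)}(t)$. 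Third, integrating $\dot{p}_{\Delta P}=\rho p_{\Delta P}-2c_{\mathrm{peg}}\Delta P$ backward from the common terminal/transversality data with a larger source term yields $p_{\Delta P}^{(1)}(t)\geq p_{\Delta P}^{(2)}(t)$. Fourth, the forcing in the $p_{\mathrm{liq}}$ costate equation is $p_{\Delta P}\,\eta\,\mathbf{1}_{\{\hat{Q}>R_{\mathrm{liq}}\}}/\hat{S}_{\mathrm{out}}$, which is pointwise larger under scenario~$(1)$ in both factors (and with a smaller supply denominator), so backward integration reverses the sign and delivers $p_{\mathrm{liq}}^{(1)}(t)\leq p_{\mathrm{liq}}^{(2)}(t)$. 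Because $p_{\mathrm{bill}}$ does not depend on $\hat{Q}$, integrating $p_{\mathrm{bill}}-p_{\mathrm{liq}}$ over $[\tau_j,\tau_{j+1})$ yields $S_j(\hat{Q}^{(1)})\geq S_j(\hat{Q}^{(2)})$, which combined with the first paragraph gives part~(1).

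For part~(2), parametrising $\alpha\mapsto S_j(\alpha q)$ the above chain yields monotonicity in $\alpha$, and standard smooth dependence of ODEs on parameters gives continuity. Setting $\alpha^*=\inf\{\alpha\geq 0:\,S_j(\alpha q)>\lambda_\omega\Delta_j\}$, the dead-zone of the soft-thresholding operator gives $\omega_j^*(\hat{Q}_\alpha)=0$ on $[0,\alpha^*]$; for $\alpha>\alpha^*$ the unprojected expression is strictly negative and, as $\alpha$ grows, eventually saturates at $-\omega_{\max}$, which is the bang–off–bang set named in the statement. Part~(3) then follows because the peg-deviation forcing diverges with $\alpha$, driving $p_{\Delta P}^{(\alpha)}\to\infty$ and $S_j(\alpha q)\to\infty$, at which point the outer projection clips the reallocation rate to $-\omega_{\max}$.

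The principal obstacle is that the comparison argument must in principle be run on the open-loop optimality system at two distinct $\hat{Q}$ scenarios whose optimal controls themselves differ; to close this cleanly one must either invoke a monotone-comparative-statics argument (the stage cost is jointly convex in $\omega$ and supermodular in $(-\omega,\hat{Q})$ through the pathway described above, so the optimiser is antitone in $\hat{Q}$) or equivalently run the four-step comparison on the coupled state–costate–control fixed-point system and verify that the fixed-point map inherits the same monotonicity. A secondary subtlety is that $[\,\cdot\,]_+$ and the indicator $\mathbf{1}_{\{\hat{Q}>R_{\mathrm{liq}}\}}$ preclude smooth sensitivity lemmas, but because the right-hand sides remain Lipschitz and order-preserving these non-smooth switchings do not break the comparison; finally, the literal three-valued set $\{-\omega_{\max},0,+\omega_{\max}\}$ in part~(2) is most faithfully read as naming the three qualitative operating regimes, since for $\rho_\omega>0$ a short continuous transition interpolates between the inaction zone and the saturated cash-building regime.
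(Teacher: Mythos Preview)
Your argument follows essentially the same four-step comparison as the paper's proof: state monotonicity ($R_{\mathrm{liq}}$, $\Delta P$), costate monotonicity ($p_{\Delta P}$, $p_{\mathrm{liq}}$), independence of $p_{\mathrm{bill}}$ from $\hat{Q}$, and finally monotonicity of $S_j$ fed into the shrink--project formula. One point worth noting: you obtain $p_{\Delta P}^{(1)}\geq p_{\Delta P}^{(2)}$, whereas the paper writes the opposite inequality; your version is the correct one, since integrating $\dot{p}_{\Delta P}=\rho p_{\Delta P}-2c_{\mathrm{peg}}\Delta P$ backward from common terminal data gives $p_{\Delta P}(t)=e^{\rho(t-\tau_{j+1})}p_{\Delta P}(\tau_{j+1})+2c_{\mathrm{peg}}\int_t^{\tau_{j+1}}e^{\rho(t-s)}\Delta P(s)\,ds$ with a \emph{plus} sign, and the paper's two sign slips happen to cancel at the $p_{\mathrm{liq}}$ stage. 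Your explicit flagging of the ``optimal controls differ across scenarios'' obstacle and the reading of the three-valued set as regimes rather than a literal range are both fair refinements that the paper passes over (it simply fixes $(\omega_j,\delta_j)$ in Step~1 without closing the loop).
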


\begin{proof}
We establish monotonicity through a sequence of comparison results on the state-costate system.

\textbf{Step 1: State monotonicity.} Under fixed controls $(\omega_j, \delta_j)$ and identical initial conditions, the state dynamics yield:
\begin{equation}
\frac{d}{dt}[R_{\mathrm{liq}}^{(1)} - R_{\mathrm{liq}}^{(2)}] = -[\hat{Q}^{(1)} - \hat{Q}^{(2)}] \leq 0,
\end{equation}
implying $R_{\mathrm{liq}}^{(1)}(t) \leq R_{\mathrm{liq}}^{(2)}(t)$ for all $t \in [\tau_j, \tau_{j+1})$.

\noindent 
For the peg deviation, the fair-value deviations are
\[
\Delta P_{\mathrm{fair}}^{(i)}(t)
:=
\mathbf{1}_{\{\hat{S}_{\mathrm{out}}(t)>0\}}
\left[
\frac{\tau_{\mathrm{set}}[\hat{Q}^{(i)}(t)]_+ - R_{\mathrm{liq}}^{(i)}(t)}
     {\hat{S}_{\mathrm{out}}(t)}
\right]_0^1,
\qquad i\in\{1,2\}.
\]
Since $\hat{Q}^{(1)}(t)\ge\hat{Q}^{(2)}(t)$ and
$R_{\mathrm{liq}}^{(1)}(t)\le R_{\mathrm{liq}}^{(2)}(t)$, the numerator
$\tau_{\mathrm{set}}[\hat{Q}^{(i)}]_+ - R_{\mathrm{liq}}^{(i)}$ is
weakly larger for $i=1$ than for $i=2$, and the $[\cdot]_0^1$ clipping
preserves this ordering, giving
$\Delta P_{\mathrm{fair}}^{(1)}(t)\ge \Delta P_{\mathrm{fair}}^{(2)}(t)$.
Under identical fee controls $\delta_j$, the peg dynamics imply that the difference
$D(t):=\Delta P^{(1)}(t)-\Delta P^{(2)}(t)$ satisfies
\begin{equation}
\dot D(t)
= \eta\big(\Delta P_{\mathrm{fair}}^{(1)}(t)-\Delta P_{\mathrm{fair}}^{(2)}(t)\big) - \eta D(t).
\end{equation}
With $D(\tau_j)=0$ and nonnegative forcing, comparison yields
$D(t)\ge 0$ for all $t\in[\tau_j,\tau_{j+1})$, i.e.\ $\Delta P^{(1)}(t)\ge \Delta P^{(2)}(t)$.

\textbf{Step 2: Costate monotonicity.} Working backwards from terminal conditions at $\tau_{j+1}$, the variation of constants formula for the peg costate gives:
\begin{equation}
p_{\Delta P}(t)
=
e^{(\rho+\eta)(t-\tau_{j+1})}p_{\Delta P}(\tau_{j+1})
+ 2c_{\mathrm{peg}}\int_t^{\tau_{j+1}} e^{(\rho+\eta)(t-s)}\Delta P(s)\,ds.
\end{equation}
Since $\Delta P^{(1)} \geq \Delta P^{(2)}$, we have $p_{\Delta P}^{(1)}(t) \geq p_{\Delta P}^{(2)}(t)$.

For the liquidity costate, the dynamics
\begin{equation}
\dot{p}_{\mathrm{liq}}
=
(\rho - r_{\mathrm{cash}})p_{\mathrm{liq}}
+ \eta\,
\frac{\mathbf{1}_{\{\tau_{\mathrm{set}}[\hat{Q}(t)]_+ > R_{\mathrm{liq}}\}}}
     {\hat{S}_{\mathrm{out}}}\,p_{\Delta P}
\end{equation}
form a linear system with monotone coefficient (the indicator function) and monotone forcing (since $p_{\Delta P}^{(1)} \geq p_{\Delta P}^{(2)}$ and $p_{\Delta P}\ge 0$ for $\Delta P\ge 0$). Hence,
\[
\dot{p}_{\mathrm{liq}}^{(1)}(t) \leq \dot{p}_{\mathrm{liq}}^{(2)}(t) \quad \text{for all } t,
\]
and, consequently,
\[
p_{\mathrm{liq}}^{(1)}(t) \leq p_{\mathrm{liq}}^{(2)}(t) \quad \text{for all } t.
\]

The bill costate evolution is independent of $\hat{Q}$, so $p_{\mathrm{bill}}^{(1)} = p_{\mathrm{bill}}^{(2)}$.

\textbf{Step 3: Switching function monotonicity.} The averaged switching function satisfies:
\begin{equation}
S_j(\hat{Q}^{(1)}) - S_j(\hat{Q}^{(2)}) = \int_{\tau_j}^{\tau_{j+1}} [p_{\mathrm{liq}}^{(2)}(t) - p_{\mathrm{liq}}^{(1)}(t)]\,dt \geq 0.
\end{equation}

From \eqref{eq:omega-window}, the optimal control is:
\begin{equation}
\omega_j^* = \text{Proj}_{[-\omega_{\max}, \omega_{\max}]}\left(-\frac{1}{\rho_{\omega}\Delta_j}\text{shrink}(S_j, \lambda_{\omega}\Delta_j)\right).
\end{equation}
Since $z \mapsto -\text{shrink}(z, \lambda)$ is monotone decreasing and projection preserves monotonicity, larger $S_j$ yields smaller (more negative) $\omega_j^*$.

For the parametric family $\hat{Q}_{\alpha} = \alpha q$, since $S_j(\hat{Q}_{\alpha})$ is continuous in the forecast path $\hat{Q}_{\alpha}(t)$, it is also continuous in $\alpha$ for $\hat{Q}_{\alpha}(t) = \alpha q(t)$. The threshold structure of the shrinkage operator then guarantees existence of $\alpha^*$ where the control switches from inaction to action. 
\end{proof}

This theorem provides rigorous justification for a simple operational rule:
\begin{itemize}
\item \textbf{Normal markets} (low $\alpha$): The optimal policy maintains treasury positions to capture yield
\item \textbf{Stress onset} ($\alpha \approx \alpha^*$): The system reaches a tipping point where liquidity concerns dominate yield considerations
\item \textbf{Crisis mode} (high $\alpha$): The policy aggressively builds cash reserves at maximum feasible rate
\end{itemize}

The monotonicity property ensures this transition is predictable and one-directional as stress intensifies, preventing rapid switching that could amplify market disruption and erode confidence. This result provides a mathematical formalization of prudent reserve management: Stable markets allow for large positions in interest-bearing T-bonds. With increasing volatility, a significant shift to cash is necessary before sell-offs become self-reinforcing.

\section{Numerical Results}
\label{sec:numerical_results}

In this section, we use simulations of diverse market phenomena to demonstrate the effectiveness of the proposed control algorithm considering both profitability and safety.

\subsection{Experimental Design}
\label{subsec:exp_design}

We conduct numerical experiments over 92 calendar days (August 1 - October 31, 2025) with three equally spaced settlement windows per day (so $\Delta_j=8$ hours), aligning with major settlement times. The performance of various control strategies is evaluated using three scenarios that capture characteristic market stress patterns empirically observed in financial markets. The construction of these scenarios enables precise testing of our central hypothesis that the developed control algorithm fulfills three essential requirements: first, the exploitation of lucrative investment opportunities in Treasury Bills during calm market phases; second, the maintenance of sufficient liquidity reserves in volatile phases with potentially endogenous bank-run dynamics; and third, the early detection of emerging volatility phases, such that proactive liquidity procurement is initiated at early signs of a demand shock. To systematically investigate these properties, we define the following market scenarios:

\begin{enumerate}
\item \textbf{Single shock event:} The system operates under baseline conditions for days 0-30, experiences elevated redemption pressure during days 30-45 with baseline intensity $\lambda_{0,R}$ multiplied by a factor $\chi \sim \text{Uniform}[2,4]$, then undergoes exponential mean-reversion with a 4-day time constant through day 92. This stress test examines the controller's ability to respond decisively to sustained redemption pressure and subsequently restore normalized operations.

\item \textbf{Prolonged clustering:} The system maintains baseline parameters until day 30. From days 30-70, we increase the self-excitation parameter $\kappa_R$ equally for both mint and redemption requests to bring the branching ratio $\kappa_R/\theta_R$ to 0.85, approaching criticality. This near-critical regime generates persistent "aftershock" bursts in mints and redemptions that emerge despite normal baseline intensities. After day 70, parameters return to baseline.

\item \textbf{False alarm (whipsaw):} A brief mint surge occurs during days 30-32 where $\lambda_{0,M}$ triples, followed by a modest redemption increase to $1.5 \times \lambda_{0,R}$ during days 32-35, then returning to baseline. This pattern tests whether the controller avoids unnecessary cash accumulation in response to transient imbalances.
\end{enumerate}

Peg stability is measured by the failure probability, defined as the relative frequency of severe depeg events ($\Delta P \geq 0.85$). Responsiveness is measured by the time lag between a shock onset and the controller's increase in cash-buying speed.

In addition to the optimal control described in Subsection \ref{subsec:settlement}, we evaluate the performance of three reference strategies:
\begin{enumerate}
\item \textbf{The maximum yield strategy:} Maximizes Treasury Bond holdings and converts these to cash exclusively when immediately necessary to cover redemption requests.
\item \textbf{The maximum liquidity strategy:} Maintains the entire portfolio exclusively in liquid assets.
\item {\textbf{The equal weights strategy:} Maintains a fixed 50/50 allocation between cash and Treasury Bills, rebalancing at each settlement window to restore the target split and converting bills to cash when immediately necessary to cover redemption requests.}
\end{enumerate}

\subsection{Parameter Estimation}
\label{subsec:parameter_estimation}

Throughout the numerical section we measure time in hours. Annualized rates are converted to hourly continuously-compounded rates by division by $8760$:
\begin{itemize}
\item 90-day Treasury-bill yield: 
$
r_{\mathrm{bill}} = 0.054/8760 \approx 6.16 \times 10^{-6}\ \mathrm{hour}^{-1}.
$
(Over an $8$-hour settlement window this corresponds to $r_{\mathrm{bill}}\Delta_j \approx 4.93\times 10^{-5}$.)
\item Discount rate: 
$
\rho = 0.08/8760 \approx 9.13 \times 10^{-6}\ \mathrm{hour}^{-1}.
$
(Over an $8$-hour window: $\rho \Delta_j \approx 7.31\times 10^{-5}$.)
\end{itemize}

The following operational parameters are calibrated to approximate those commonly found in industry practices and settlement infrastructure:
\begin{itemize}
\item Maximum reallocation rate: $\omega_{\max}=0.1\,S_{\mathrm{out}}(0)/\Delta_j$, permitting up to 10\% of the initial outstanding supply to be shifted per settlement window.
\item Settlement windows: $\Delta_j = 8$ hours (three windows per day)
\item Settlement horizon in \eqref{eq:DeltaPfair-def}: $\tau_{\mathrm{set}} = \Delta_j = 8$ hours.
\item Fee bounds (even if $\delta\equiv 0$ in the baseline experiments): $\delta_{\max}=0.01$ (100 bps).
\end{itemize}

To simplify the interpretation of numerical results, we assume zero mint or redemption fee ($\delta = 0$). For modelling the Hawkes processes, we choose parameters to mimic the dynamics observed in major stablecoins (USDC, USDT) during 2024-2025. For baseline intensities, we set $\lambda_{0,R} = 80$ events/hour for redemptions and $\lambda_{0,M} = 85$ events/hour for mints, corresponding to daily baseline volumes of 1,920 and 2,040 events respectively, which align with typical median volumes for these assets. 

To capture the clustering behavior characteristic of stablecoin transactions, we parameterize the self-excitation dynamics with decay rates of $\theta_R = 2.0$ hour$^{-1}$ and $\theta_M = 1.5$ hour$^{-1}$, yielding half-lives of approximately 20-30 minutes that reflect how quickly bursts of activity dissipate. We set the corresponding jump sizes to $\kappa_R = 0.8$ and $\kappa_M = 0.6$, which ensure subcritical branching ratios ($\kappa_R/\theta_R = 0.4 < 1$ and $\kappa_M/\theta_M = 0.4 < 1$) necessary for process stability. Finally, transaction sizes follow exponential distributions with mean redemption size $\bar{z}_R = \$250,000$ and mean mint size $\bar{z}_M = \$300,000$, values representative of institutional trading volumes in these markets.

The cost function parameters are scaled to be commensurate with hourly dollar cash flows. 
For peg stability we penalize squared deviations and choose $c_{\mathrm{peg}}$ such that sustaining a deviation of $\Delta P_{\mathrm{tol}}=10$ bps ($=0.001$) carries an hourly cost comparable to the hourly T-bill carry on the outstanding supply:
\begin{equation}
c_{\mathrm{peg}}
=
\frac{r_{\mathrm{bill}}\,S_{\mathrm{out}}(0)}{\Delta P_{\mathrm{tol}}^{2}}
\approx
\frac{(6.16\times 10^{-6})\cdot (10^{10})}{(10^{-3})^{2}}
\approx 6.16\times 10^{10}.
\end{equation}
For fee frictions we choose $c_{\mathrm{fee}}$ such that charging $\delta_{\mathrm{tol}}=10$ bps on a typical hour of redemptions incurs a quadratic penalty comparable to the corresponding fee revenue:
\begin{equation}
c_{\mathrm{fee}}
=
\frac{\lambda_{0,R}\bar z_R}{\delta_{\mathrm{tol}}}
\approx
\frac{(100)\cdot(2.5\times 10^{5})}{10^{-3}}
\approx 2.5\times 10^{10}.
\end{equation}

Trading costs reflect treasury-market execution frictions. We set the linear term to a 1bp half-spread,
$\lambda_{\omega}=10^{-4}$, so that $\int \lambda_{\omega}|\omega(t)|\,dt$ equals approximately 1bp times traded notional,
and $\rho_{\omega}=8\times 10^{-13}$ (in hour/\$ units) for the quadratic term. The peg adjustment speed is set to $\eta=\ln(2)/4\approx 0.173\ \mathrm{hour}^{-1}$, corresponding to a half-life of 4 hours for the gap between $\Delta P(t)$ and $\Delta P_{\mathrm{fair}}(t)$. 
For the fee effectiveness we set $\gamma = 5.0\ \mathrm{hour}^{-1}$.
Finally, we set the peg-to-redemption feedback strength $\zeta$ to 2000.

For initial conditions, we consider a system at scale with outstanding supply $S_{\mathrm{out}}(0) = \$10$ billion and zero deviation from the peg ($\Delta P(0) = 0$). Transaction intensities begin at baseline levels: $\lambda_R(0) = \lambda_{0,R}$ and $\lambda_M(0) = \lambda_{0,M}$. The reserve allocation strategies differ between controllers: both the optimal controller and the maximum yield controller begin with $R_{\mathrm{liq}}(0) = \$1$ billion (10\%) in liquid assets and $R_{\mathrm{bill}}(0) = \$9$ billion (90\%) in Treasury bills, while the maximum liquidity controller starts with all reserves (\$10 billion) allocated to liquid cash.

\subsection{Presentation Of Results}
\label{subsec:results}

Revenue is calculated as the negative of the cost functional. Table \ref{tab:cost} shows the total revenue by scenario and control strategy. Table \ref{tab:freq-resp} displays the fallout rates and responsiveness measures. We simulated 100 scenarios and report average values for all metrics, except for the fallout rate, which is shown as a relative frequency. Figures \ref{fig:peg-combined}--\ref{fig:peg-combined-fa} display the peg deviations, and Figures \ref{fig:controls-grid}--\ref{fig:controls-grid-fa} illustrate the trajectories of the control input $\omega$ for all analyzed control strategies and scenarios.

\begin{table}[H]
\centering
\begin{tabular}{l l c}
\toprule
\multicolumn{1}{c}{Strategy} & \multicolumn{1}{c}{Scenario}
  & \multicolumn{1}{c}{Total Revenue [\$M]} \\
\midrule
\multirow{3}{*}{Optimal Control}
  & Single Shock            & 98.02 \\
  & Prolonged Clustering    & 421.34 \\
  & False Alarm             & 300.79 \\
\addlinespace
\multirow{3}{*}{Maximum Yield}
  & Single Shock            & 83.17 \\
  & Prolonged Clustering    & 100.61 \\
  & False Alarm             & 296.41 \\
\addlinespace
\multirow{3}{*}{Maximum Liquidity}
  & Single Shock            & 0 \\
  & Prolonged Clustering    & 0 \\
  & False Alarm             & 0 \\
\addlinespace
\multirow{3}{*}{Equal Weights}
  & Single Shock            & 51.03 \\
  & Prolonged Clustering    & 213.33 \\
  & False Alarm             & 151.03 \\
\bottomrule
\end{tabular}
\caption{Avg.\ Revenue by Strategy and Scenario}
\label{tab:cost}
\end{table}

\begin{table}[H]
\centering
\begin{tabular}{l l c c}
\toprule
\multicolumn{1}{c}{Strategy} & \multicolumn{1}{c}{Scenario}
  & \multicolumn{1}{c}{Depeg Frequency} & \multicolumn{1}{c}{Resp. (days)} \\
\midrule
\multirow{3}{*}{Optimal Control}
  & Single Shock            & 0 & 2.77 \\
  & Prolonged Clustering    & 0 & 0.67 \\
  & False Alarm             & 0 & n/a \\
\addlinespace
\multirow{3}{*}{Maximum Yield}
  & Single Shock            & 1 & 0.33 \\
  & Prolonged Clustering    & 1 & 0.33 \\
  & False Alarm             & 0 & n/a \\
\addlinespace
\multirow{3}{*}{Maximum Liquidity}
  & Single Shock            & 0 & n/a \\
  & Prolonged Clustering    & 0 & n/a \\
  & False Alarm             & 0 & n/a \\
\addlinespace
\multirow{3}{*}{Equal Weights}
  & Single Shock            & 0.69 & 0.33 \\
  & Prolonged Clustering    & 0 &  6.00 \\ 
  & False Alarm             & 0 &   n/a\\
\bottomrule
\end{tabular}
\caption{Depeg Frequency and Responsiveness}
\label{tab:freq-resp}
\end{table}

\begin{figure}[htbp]
  \centering
  \includegraphics[width=\textwidth]{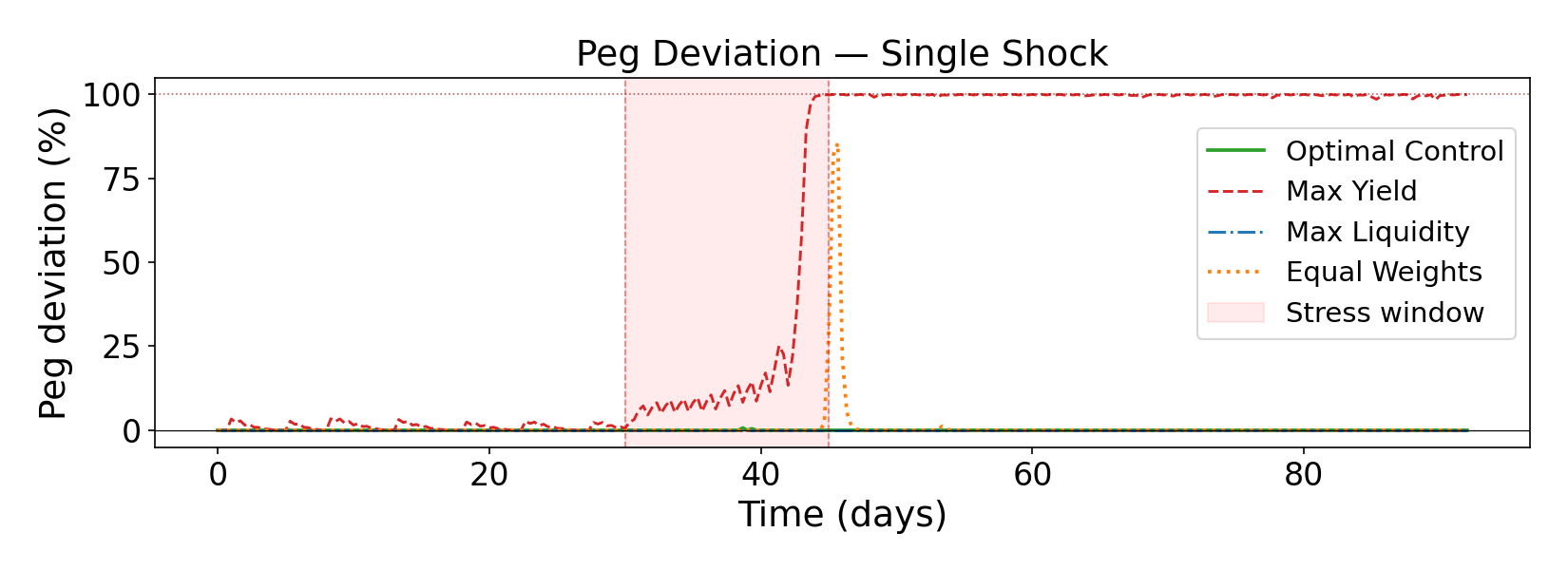}
  \caption{Peg deviation $\Delta P(t)$ under the single-shock scenario for
           all four control strategies.  The shaded region marks the stress
           window (days 30-45).}
  \label{fig:peg-combined}
\end{figure}

\begin{figure}[htbp]
  \centering
  \includegraphics[width=\textwidth]{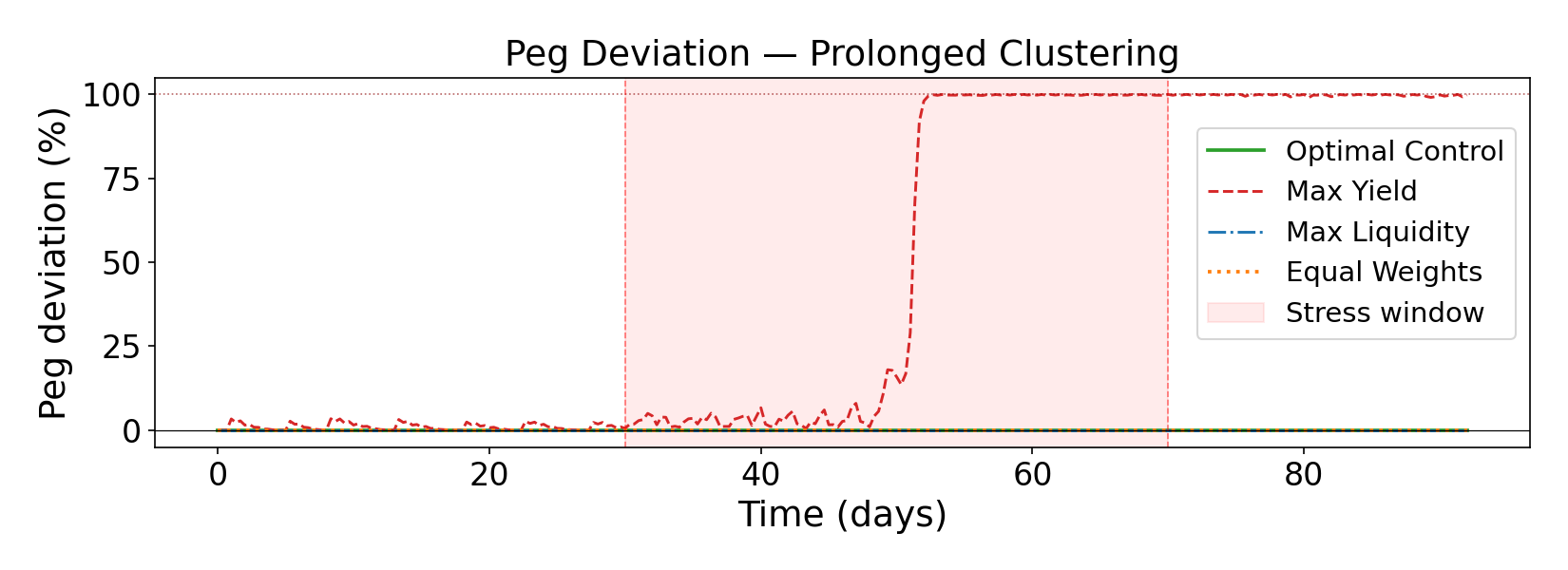}
  \caption{Peg deviation $\Delta P(t)$ under the prolonged-clustering
           scenario for all four control strategies.  The shaded region
           marks the stress window (days 30--70).}
  \label{fig:peg-combined-clustering}
\end{figure}

\begin{figure}[htbp]
  \centering
  \includegraphics[width=\textwidth]{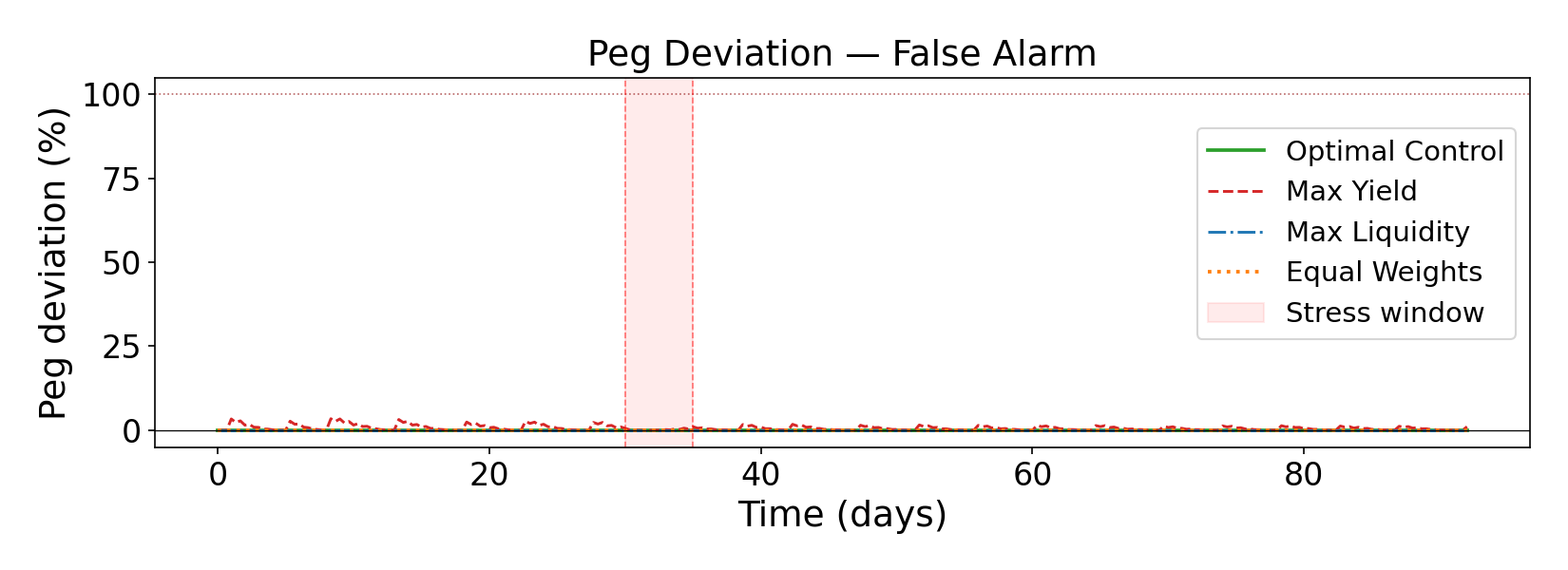}
  \caption{Peg deviation $\Delta P(t)$ under the false-alarm scenario for
           all four control strategies.  The shaded region marks the stress
           window (days 30--35).}
  \label{fig:peg-combined-fa}
\end{figure}

\begin{figure}[htbp]
  \centering
  \begin{tabular}{@{}cc@{}}
    \includegraphics[width=0.49\textwidth]{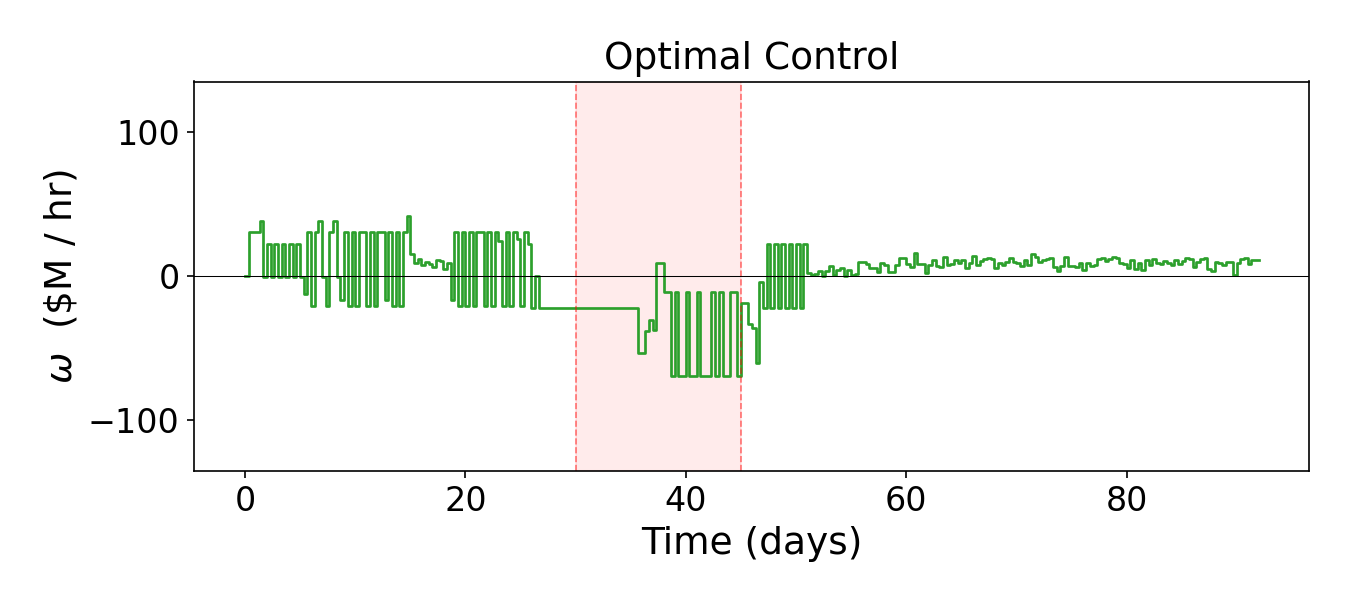} &
    \includegraphics[width=0.49\textwidth]{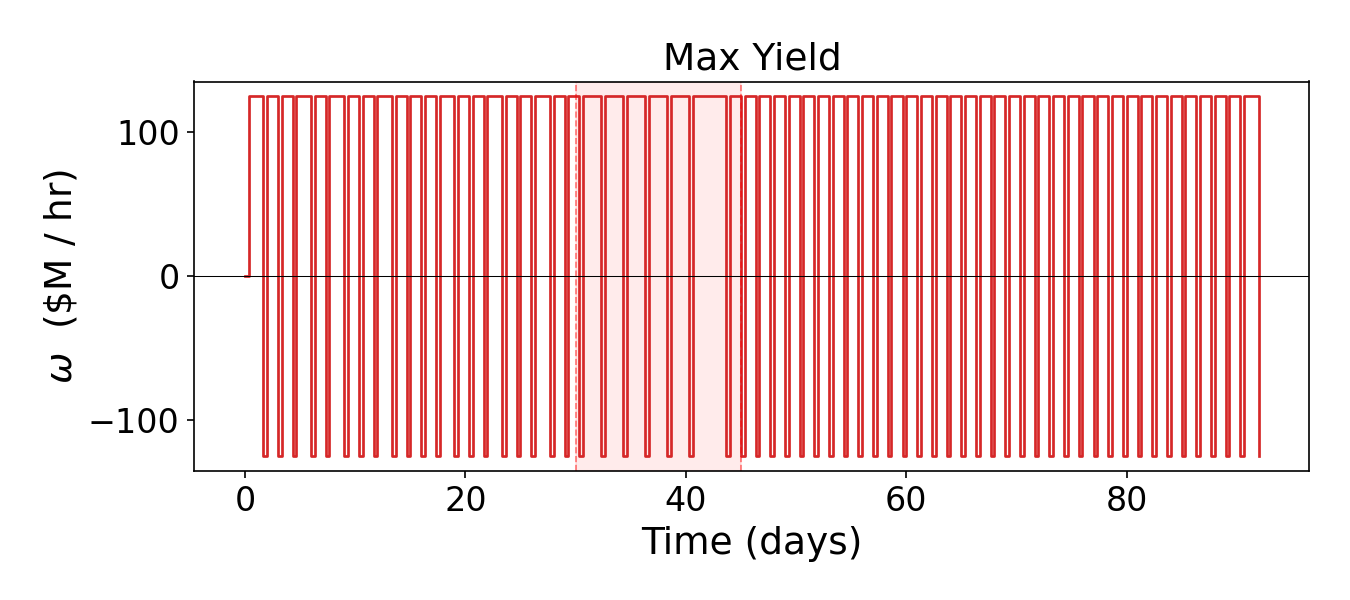} \\[4pt]
    \includegraphics[width=0.49\textwidth]{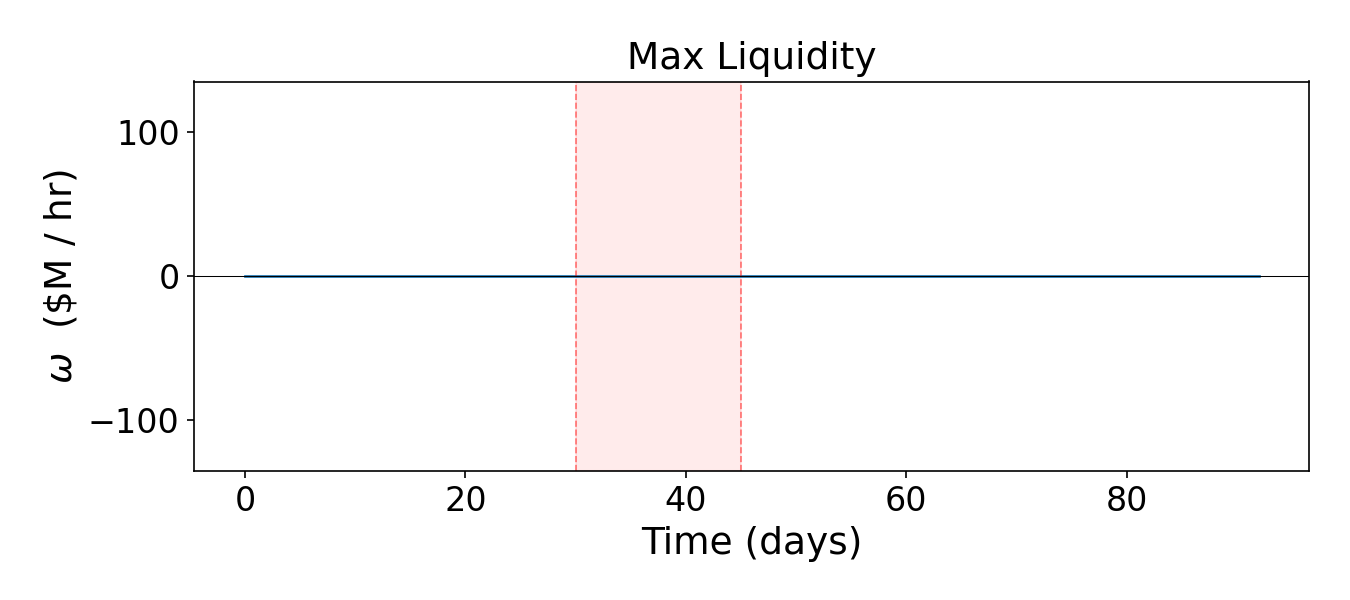} &
    \includegraphics[width=0.49\textwidth]{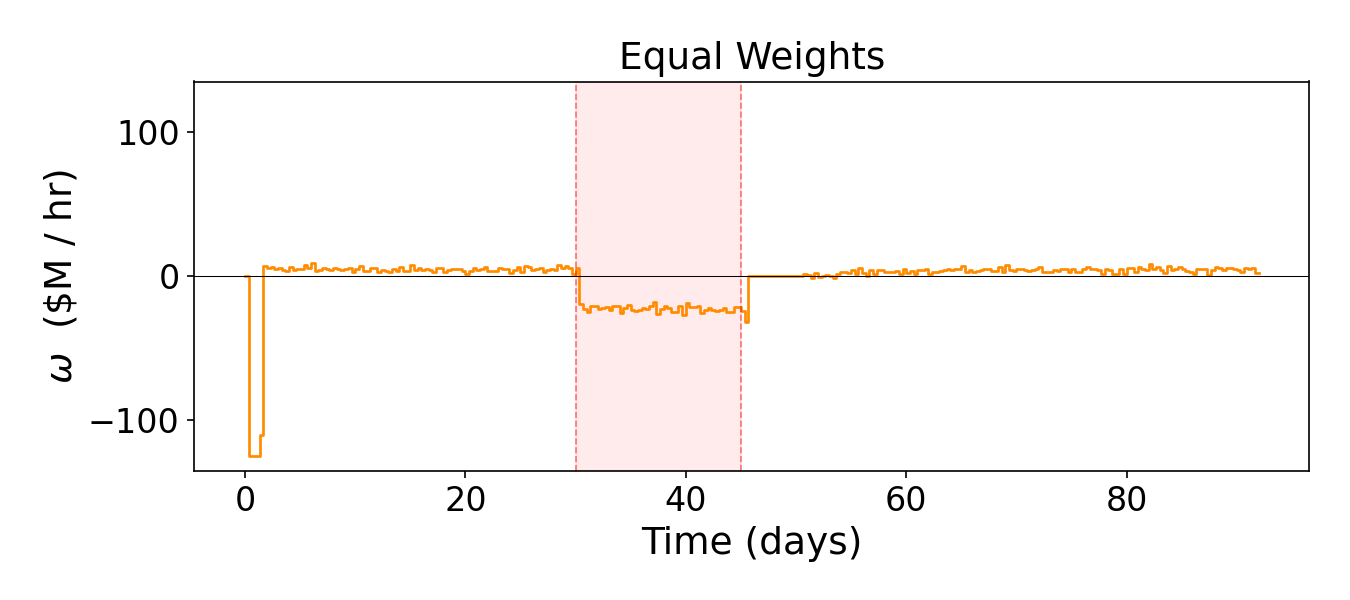}
  \end{tabular}
  \caption{Reallocation rate $\omega(t)$ under the single-shock scenario for all four strategies. }
\label{fig:controls-grid}
\end{figure}

\begin{figure}[htbp]
  \centering
  \begin{tabular}{@{}cc@{}}
    \includegraphics[width=0.49\textwidth]{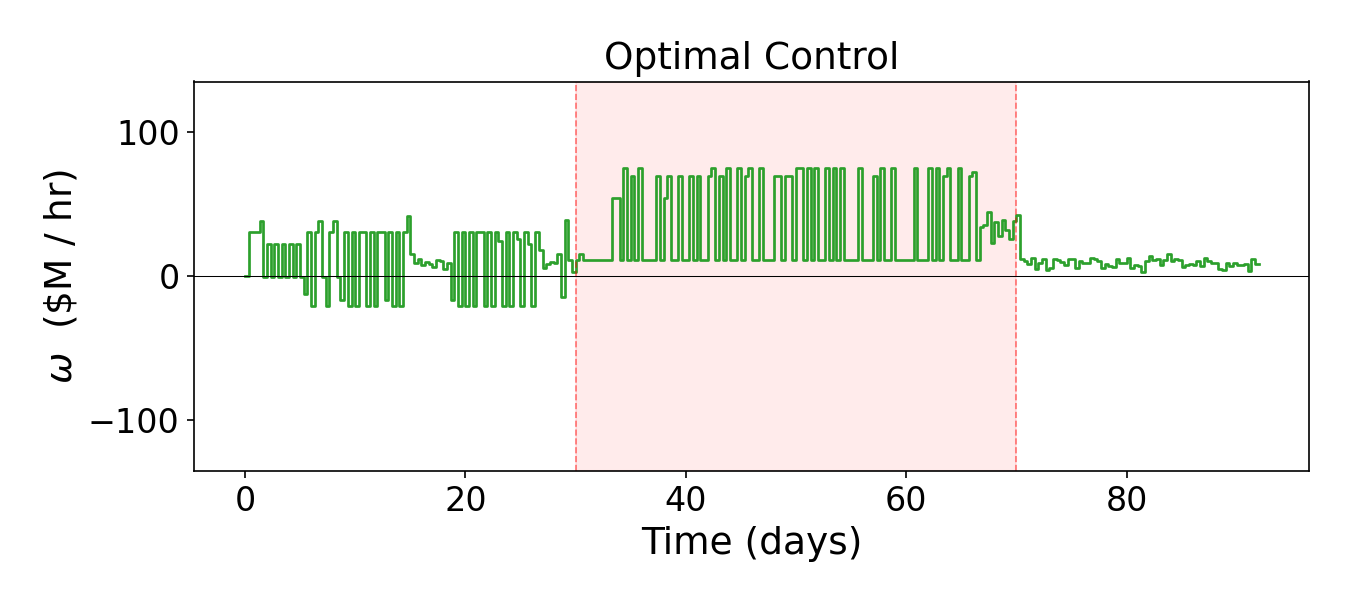} &
    \includegraphics[width=0.49\textwidth]{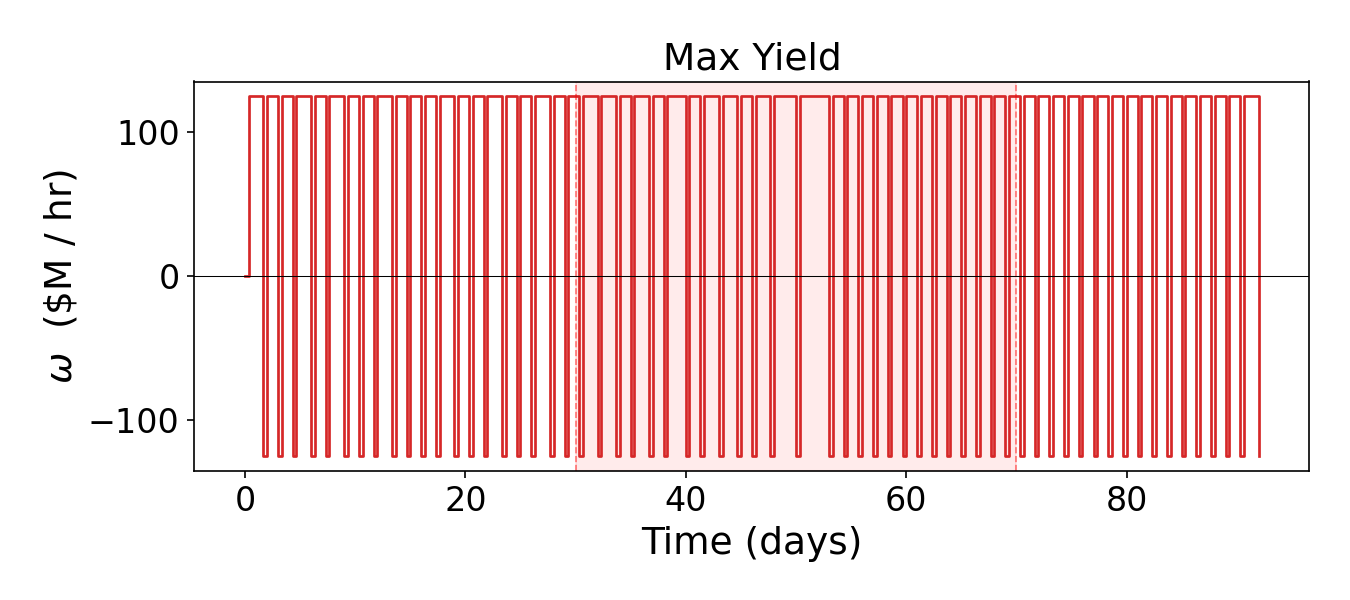} \\[4pt]
    \includegraphics[width=0.49\textwidth]{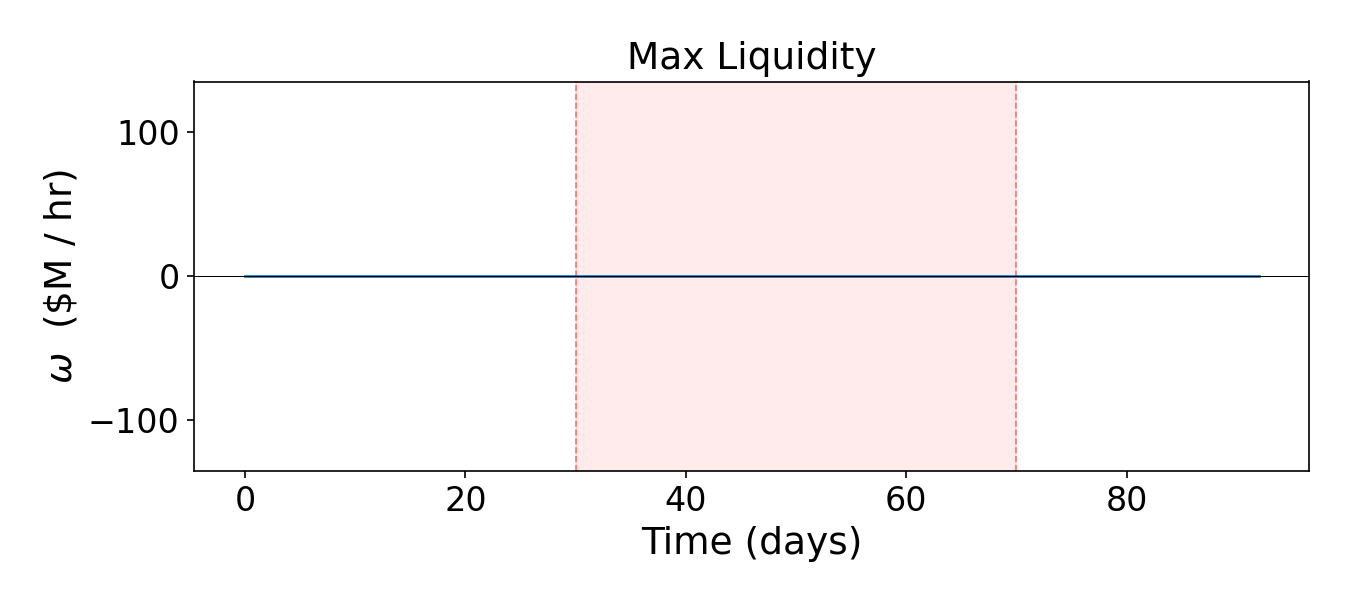} &
    \includegraphics[width=0.49\textwidth]{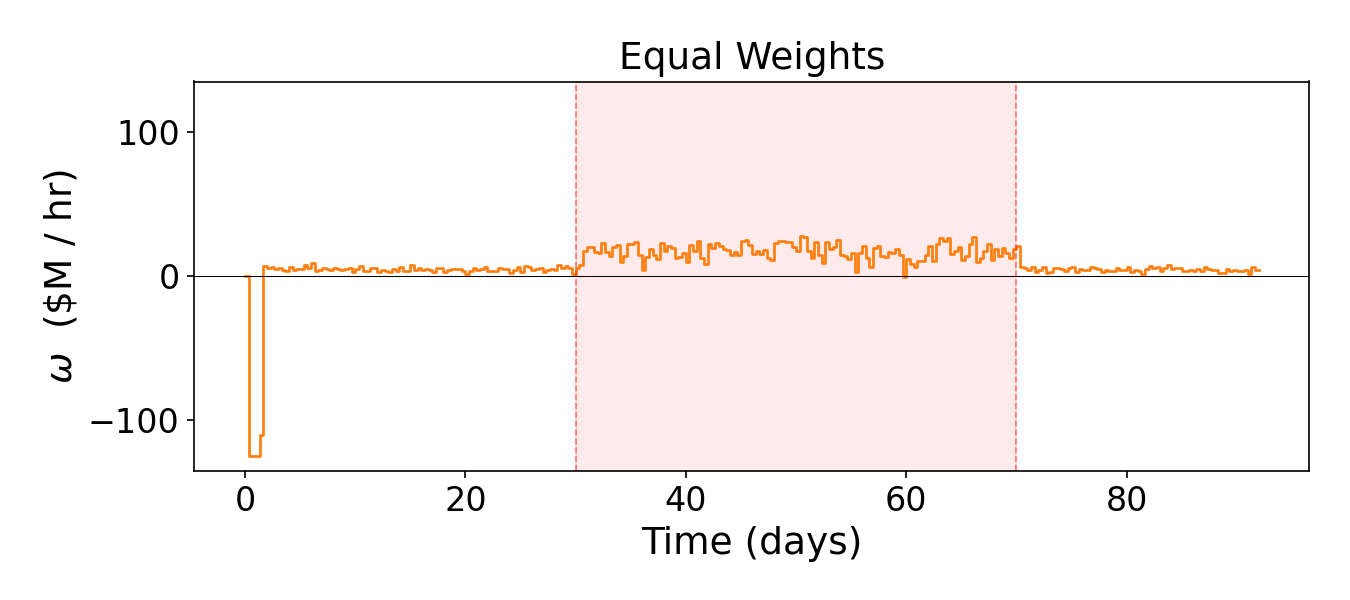}
  \end{tabular}
  \caption{Reallocation rate $\omega(t)$ under the prolonged-clustering
           scenario for all four strategies.}
  \label{fig:controls-grid-clustering}
\end{figure}

\begin{figure}[htbp]
  \centering
  \begin{tabular}{@{}cc@{}}
    \includegraphics[width=0.49\textwidth]{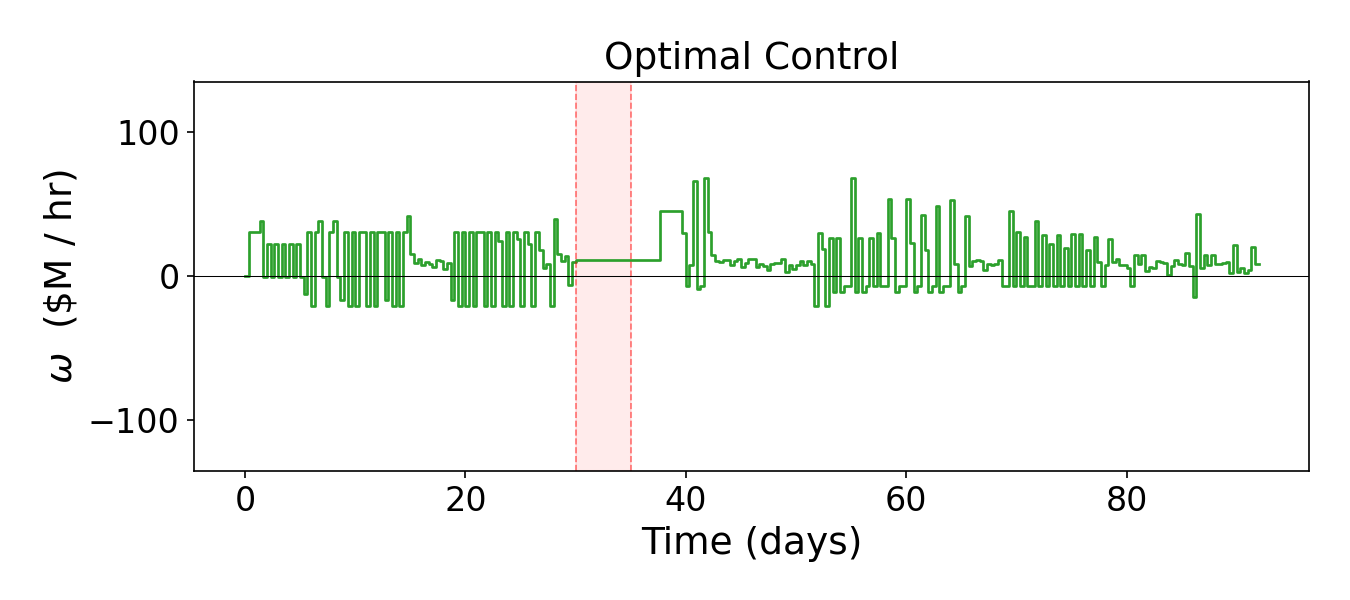} &
    \includegraphics[width=0.49\textwidth]{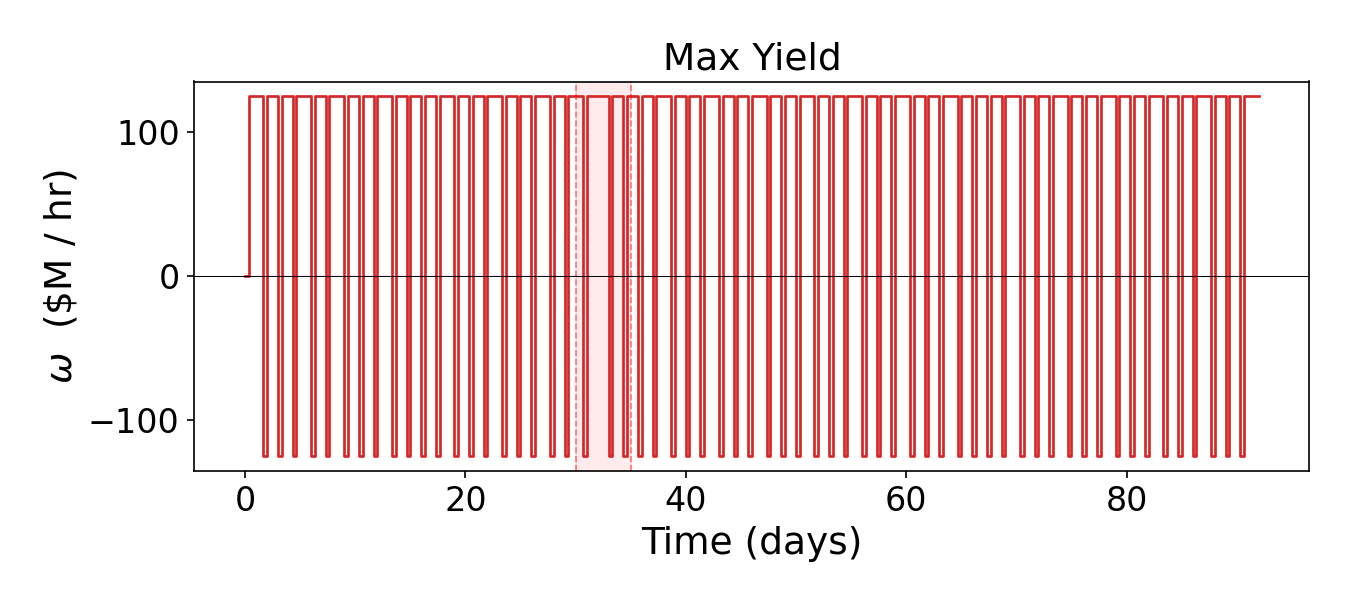} \\[4pt]
    \includegraphics[width=0.49\textwidth]{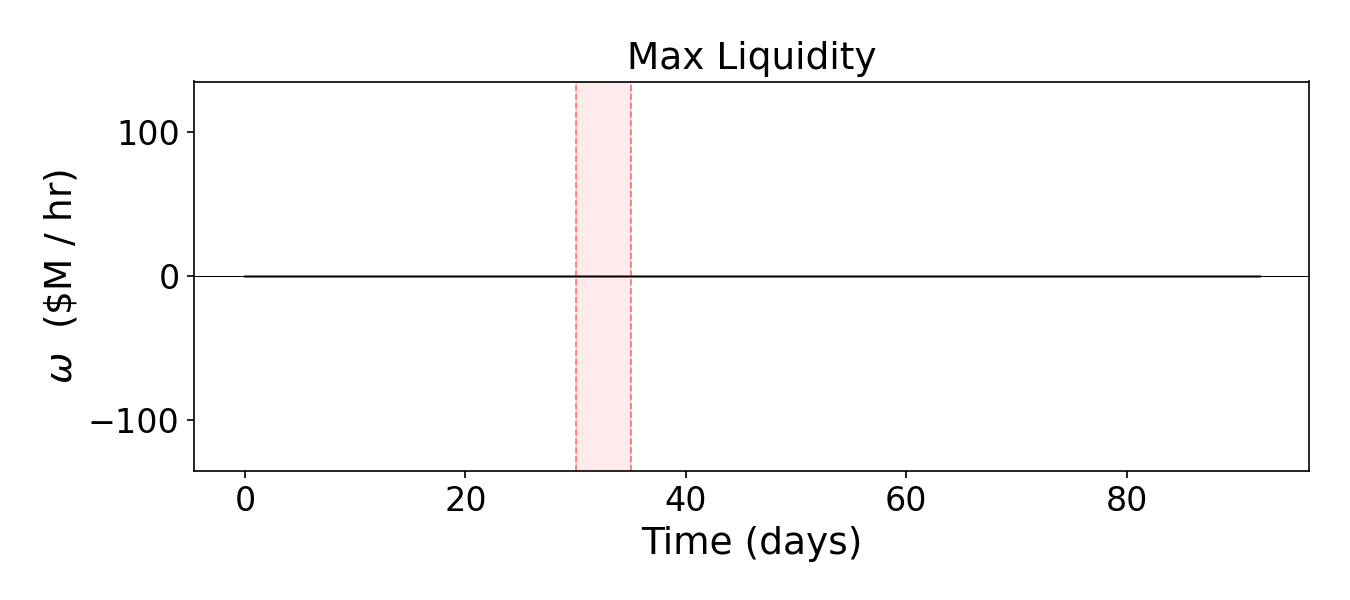} &
    \includegraphics[width=0.49\textwidth]{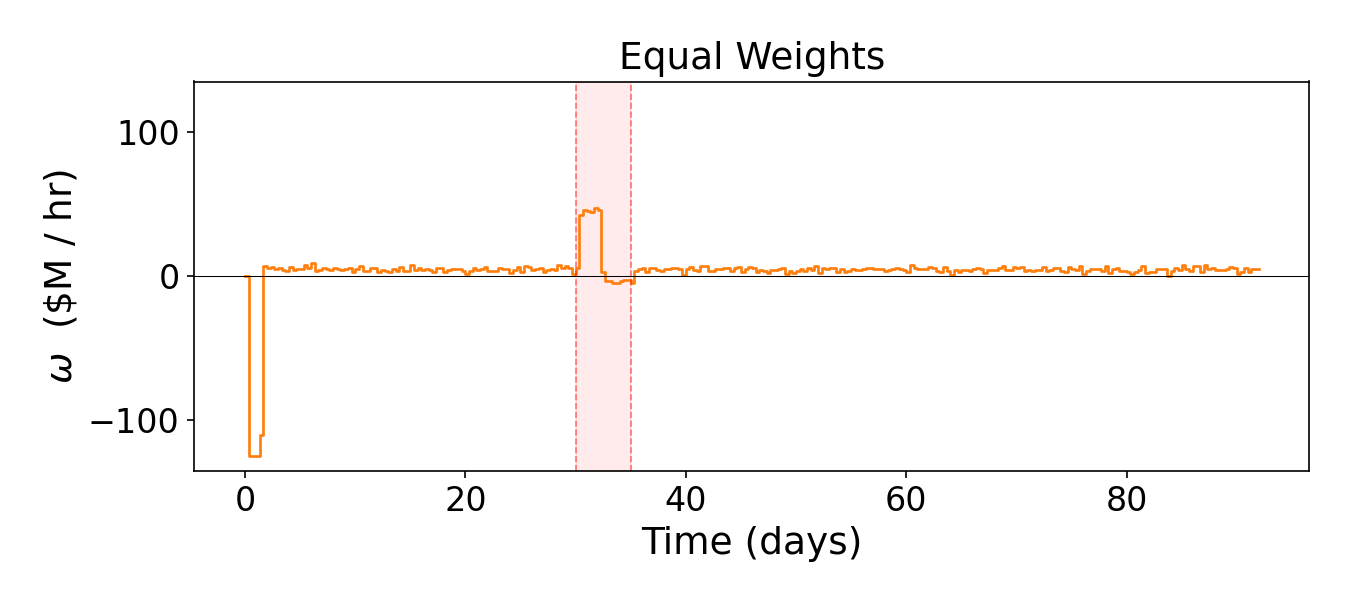}
  \end{tabular}
  \caption{Reallocation rate $\omega(t)$ under the false-alarm scenario for all four strategies.}
  \label{fig:controls-grid-fa}
\end{figure}

The simulation results demonstrate the effectiveness of the proposed control strategy across all three stress scenarios. The optimal controller adapts to changing market regimes without prior knowledge of when stress transitions occur, relying solely on filtered intensity estimates and forward-looking costate evaluation. This is visible in Figures \ref{fig:controls-grid}–\ref{fig:controls-grid-fa}: the reallocation rate shifts decisively toward cash building as redemption pressure materializes, and reverts to yield-seeking allocations once intensities subside. By contrast, the three benchmark strategies each exhibit characteristic failure modes that illustrate the limitations of static allocation rules.

The maximum yield strategy proves clearly inferior across all scenarios. Its control trajectory (Figures \ref{fig:controls-grid}–\ref{fig:controls-grid-fa}, upper right panels) reveals persistent high-frequency oscillation between maximum cash-to-bill and bill-to-cash transfers, driven by the strategy's reactive logic of liquidating bills only when cash is immediately insufficient. This oscillation is operationally costly and prevents sustained accumulation of yield-bearing reserves. More critically, aggressive yield building at stress onset delays the cash build-up needed to absorb clustered redemptions: once the accumulated liquidity shortfall triggers self-exciting bank-run dynamics via the peg-redemption feedback loop (equation \eqref{eq:dyn-lamR}), the resulting depeg is severe and irreversible within the simulation horizon (Figures \ref{fig:peg-combined}–\ref{fig:peg-combined-clustering}). The economic consequences are twofold: direct peg-deviation costs and a permanent reduction of total investable reserves through realized redemptions at distressed levels. In the prolonged-clustering scenario, this failure emerges later (around day 50) as the near-critical branching ratio sustains aftershock bursts that eventually overwhelm the depleted cash buffer.

The maximum liquidity strategy eliminates depeg risk entirely but forgoes all bill carry, yielding zero revenue across all scenarios. The equal weights strategy offers a partial improvement over maximum yield, maintaining a 50/50 target that provides a larger initial cash buffer. However, its backward-looking rebalancing rule does not adapt the target allocation to prevailing or anticipated stress conditions. In the single-shock scenario, the controller operates near maximum cash-building rate throughout the stress window, yet the 50/50 target tracks a shrinking reserve base as net redemptions accumulate, so the absolute cash level targeted also declines. The reallocation rate does not exhibit a pronounced downward shift at the depeg onset (Figure \ref{fig:controls-grid}, lower right) precisely because the controller is already at or near its maximum feasible cash-building speed—there is no additional capacity to deploy. This proves insufficient under sustained pressure, resulting in severe depegs in 69\% of single-shock runs. In the prolonged-clustering and false-alarm scenarios, the equal weights strategy avoids depegs, but its fixed allocation rule prevents it from capturing the higher bill carry that the optimal controller achieves during calm phases and during stress episodes where full cash conversion is unnecessary.
The optimal controller is thus the only strategy that simultaneously (i) builds yield-bearing reserves efficiently during calm periods, as evidenced by positive reallocation rates in the pre-stress and post-stress phases of Figures 4–6, and (ii) achieves sufficiently swift and deep reallocation to liquid cash under stress to prevent the self-reinforcing depeg dynamics that afflict both the maximum yield and equal weights strategies. The false-alarm scenario (Figure \ref{fig:peg-combined-fa}) further confirms that this responsiveness does not come at the cost of overreaction: the controller absorbs the transient mint surge and modest redemption increase without unnecessary cash accumulation, preserving carry throughout.

\section{Conclusion}
\label{sec:conclusion}

This paper develops an operations-first real-time framework for managing the reserves of a pegged digital currency, accounting for clustering dynamics in order and redemption flows and an endogenous dynamical relationship between peg deviation and redemption intensities. We represent mints and redemptions as self-exciting point processes and treat peg behavior as an explicit operational outcome by directly linking price deviations to deficits in immediate cash coverage relative to outstanding supply. The coin issuer's control mechanisms comprise two key instruments: the reallocation rate between immediately available cash and short-duration government bills, and the spread applied to mint or burn requests.

Methodologically, the control architecture combines stochastic model predictive control with a moment-closure forecast for Hawkes intensities and first-order conditions from Pontryagin’s maximum principle. This yields interpretable shadow prices for cash, bills, and peg deviation, and in turn produces implementable feedback rules. A settlement-window formulation converts the continuous problem into a sampled-data problem that matches wire cut-offs and repo calendars. In the discretized problem formulation, the rebalancing decision takes a soft-thresholding form driven by window-averaged costates, which provides a clear and auditable switching logic for treasury desks. We also establish a monotone stress-response property: as expected redemption pressure increases, as average ticket sizes grow, or as windows lengthen, the optimal policy tilts predictably toward cash. Conversely, the proposed algorithm enables efficient reallocation into higher-yielding government bonds during calm market phases, if time windows for reallocation decisions are chosen to be appropriately small.

The numerical experiments show that the window-based controller outperforms empirically common treasury heuristics. Relative to (i) a maximum-yield rule that defers liquidity building until the final settlement window and (ii) a maximum-liquidity rule that sacrifices carry, the controller retains a large share of bill carry in benign conditions while pre-emptively building cash as projected redemption intensities rise, thereby reducing de-peg risk.  Because the policy depends only on filtered intensities, outstanding supply, and window-averaged shadow values, it fits daily treasury workflows and supports supervisory review. More broadly, the framework makes the link between balance-sheet choices and market observables explicit: peg deviations can occur only when intra-window cash is insufficient relative to the outstanding supply, and the controller’s reallocation rate and fee schedule manage that risk directly.

Several potential directions for future research emerge from the framework presented in this paper. A critical risk management concern is the tail probability of a complete depeg. Since such an event is close to catastrophic for the issuer, its probability warrants careful measurement even when it is very small. In this article we simulated 100 runs, which provides only limited estimation precision of probabilities below the one percent level. Future analyses can tighten this by running much larger simulation batches on specialized hardware, by applying dedicated rare-event techniques, or by developing analytical tail approximations.

Several extensions, both in terms of theoretical modeling and empirical validation, would deepen applicability of the proposed approach. Richer flow models with state-dependent trade sizes and cross-asset excitation, online estimation for time-varying branching ratios, robust or risk-sensitive MPC variants, and historical backtests with issuer data would allow closer alignment with specific mandates and regulatory constraints. Taken together, the contributions here provide a settlement-aware, interpretable, and auditable control system that turns reserve management into a disciplined optimization problem and offers a practical blueprint for resilient peg maintenance.

\section*{Acknowledgement(s)}
During the preparation of this manuscript, the author used ChatGPT 5 and Claude 4 for stylistic improvements and coding assistance. The outputs were carefully reviewed and revised, and all responsibility for the final content rests with the author.

\section*{Disclosure Statement}
The author reports there are no competing interests to declare.

\section*{Funding}
No funding was received for the preparation of this manuscript.

\bibliographystyle{tfq}         
\bibliography{interacttfqsample}        
\end{document}